\DeclareRobustCommand{\[}{\begin{equation}}
\DeclareRobustCommand{\]}{\end{equation}}
\newcommand{\id}{\ensuremath{\mathds{1}}}
\declaretheoremstyle[
    headfont=\bfseries, 
    bodyfont=\normalfont,
    headpunct={.},
    spacebelow=\parsep,
    spaceabove=\parsep,
    mdframed={
      roundcorner=10pt,
      linewidth=1pt,
        innertopmargin=6pt,
        innerbottommargin=6pt, 
        skipabove=3ex, 
        skipbelow=3ex,
         nobreak=true
       } 
]{framedstyle}
\declaretheorem[style=framedstyle,name=Theorem]{theorem}
\declaretheorem[style=framedstyle,name=Lemma, numberlike=theorem]{lemma}
\declaretheorem[style=framedstyle,name=Definition, numberlike=theorem]{definition}
\xpatchcmd\thmt@restatable{%
\csname #2\@xa\endcsname\ifx\@nx#1\@nx\else[{#1}]\fi
}{%
\ifthmt@thisistheone
\csname #2\@xa\endcsname\ifx\@nx#1\@nx\else[{#1}]\fi
\else
\csname #2\@xa\endcsname[{restated}]
\fi}{}{}
\begin{document}

\title{\textbf{Testing multipartite productness is easier \\ than testing bipartite productness}}
\author[1,2,3]{Benjamin D.M. Jones}
\author[1,4]{Ashley Montanaro}

\affil[1]{School of Mathematics, University of Bristol, UK.}
\affil[2]{H. H. Wills Physics Laboratory, University of Bristol, UK.}
\affil[3]{Quantum Engineering Centre for Doctoral Training,  University of Bristol, UK.}
 \affil[4]{Phasecraft Ltd.}

\maketitle

 \begin{abstract}
     We prove a lower bound on the number of copies needed to test the property of a multipartite quantum state being product across some bipartition (i.e. not genuinely multipartite entangled), given the promise that the input state either has this property or is $\epsilon$-far in trace distance from any state with this property. We show that $\Omega(n / \log n)$ copies are required (for fixed $\epsilon \leq \frac{1}{2}$), complementing a previous result that $O(n / \epsilon^2)$ copies are sufficient. Our proof technique proceeds by considering uniformly random ensembles over such states, and showing that the trace distance between these ensembles becomes arbitrarily small for sufficiently large $n$ unless the number of copies is at least $\Omega (n / \log n)$. We discuss implications for testing graph states and computing the generalised geometric measure of entanglement.
 \end{abstract}

\tableofcontents

 \section{Introduction}

Quantum entanglement \cite{horodecki2009quantum,guhne2009entanglement, bengtsson2017geometry} is celebrated as a ubiquitous resource across the whole landscape of quantum information and technology. In measurement based approaches to quantum computation \cite{raussendorf2001one,jozsa2006introduction, briegel2009measurement}, one seeks to generate entanglement between multiple sites, for example via the creation of graph states \cite{hein2006entanglement, hein2004multiparty}, and an important practical task is to be able to certify the presence (or lack of) such entanglement \cite{markham2020simple, jungnitsch2011taming}. Multipartite entanglement \cite{walter2016multipartite, bengtsson2016brief} is also a fundamental component of quantum networks \cite{kimble2008quantum, tavakoli2021bell} and plays a significant role in quantum error correction \cite{scott2004multipartite, dur2007entanglement}.

In classical computer science, the domain of property testing \cite{goldreich1998property, fischer2004art} seeks to ascertain if a given object has some property $P$, or is far away from having that property.  An \textit{$\epsilon$-tester} takes as input either $x \in P$ or $x$ $\epsilon$-far from $P$, and in the former case it accepts with probability at least $\frac{2}{3}$, whereas in the latter case it accepts with probability at most $\frac{1}{3}$. A tester is deemed \textit{efficient} if the number of queries made (e.g. number of bits of the object read) is much less than the size $n$ of the object. \textit{Quantum} property testing applies these notions to the quantum world, where one can take either the tester or the object to be tested (or both) to be quantum mechanical in some aspect -- see \cite{montanaro2013survey} for a comprehensive review. When testing properties of quantum states, one typically seeks algorithms that minimise the number of copies required to test the desired property. In particular, it is highly desirable to prove lower bounds on the number of copies required, to understand the optimality of various approaches and the fundamental limits on extracting information from quantum states.

In this paper we will study the property of a multipartite quantum state being product across some (unknown) bipartition, or equivalently the property of not being \textit{genuinely multipartite entangled}, through the lens of property testing. Let us formalise some definitions. Recall that a bipartite pure state $\ket{\psi} \in \mathbbm{C}^{d_1}\otimes \mathbbm{C}^{d_2}$ is \textit{entangled} if it cannot be written as a product state, i.e. as $\ket{\psi} = \ket{\phi}\otimes \ket{\tau}$ for some states $\ket{\phi} \in \mathbbm{C}^{d_1}$ and $\ket{\tau} \in \mathbbm{C}^{d_2}$.

\begin{definition}
    Consider a pure quantum state $\ket{\psi} \in ( \mathbbm{C}^d ) ^{\otimes n}$ consisting of $n$ parties, each of local dimension $d$. We say that it is
    \begin{itemize}
        \item \textit{Genuinely multipartite entangled} (GME) if it is entangled across any bipartition of the $n$ parties.
        \item \textit{Bipartite product} (BP) if it is not GME, that is, there exists some non-trivial partition $S \subset [n]$ such that the state is product across this bipartition.
\item \textit{Multipartite product} (MP) if the state is product across every bipartition, i.e. the state can be written as the tensor product of $n$ local states.
    \end{itemize} 
   
\end{definition}

In \cite{harrow2013testing}, it is shown that given an $n$-partite state $\ket{\psi}$ that is either multipartite product, or is at least $\epsilon$-far from any multipartite product state, there exists a tester using two copies of the input state $\ket{\psi}$, and accepts with certainty if $\ket{\psi}$ is MP and accepts with probability at most $1-\Theta(\epsilon^2)$ otherwise. Repeating this procedure $k$ times (using $2k$ copies) reduces this latter probability to $(1-\Theta(\epsilon^2))^{k} \leq e^{- \Theta (k \epsilon^2)}$, and hence the property of multipartite productness can be tested using $O(1/\epsilon^2)$ copies, for any $n$. The proof strategy uses the product test \cite{mintert2005concurrence}, which in turn consists of applying the swap test \cite{buhrman2001quantum} (a simple test for equality of two states) across corresponding pairs of subsystems of two copies of the input state (also see \cite{montanaro2013survey} for a proof sketch).

Furthermore, a general result is derived in \cite{harrow2017sequential} for testing multiple properties of a quantum state simultaneously. More specifically, given a set of measurement operators $\Lambda_i$ (POVM elements satisfying $0 \leq \Lambda_i \leq \id$ ) and an input state $\rho$ with the promise that either $\text{Tr}(\Lambda_i \rho)$ is small for all $i$, or there exists at least one $i$ with $\text{Tr}(\Lambda_i \rho)$ large, the authors construct a procedure that distinguishes between these cases using one copy of the input state $\rho$. In the same paper this result is applied to testing bipartite productness: building upon the result from \cite{harrow2013testing} they derive a tester for the property of being bipartite product using $O(n/\epsilon^2)$ copies of the state.

In this work, we show that this is close to optimal --  at least $\Omega(n / \log n)$ copies are needed to test bipartite productness, for any fixed constant $0 < \epsilon \leq \frac{1}{2}$. To the best of our knowledge this is the first lower bound constructed for this problem. Our main result can be stated formally as follows.

\begin{restatable}{theorem}{mainthm}
    \label{thm:main}
    An $\epsilon$-tester for testing the property of a multipartite state $\ket{\psi}\in (\mathbbm{C}^{d})^{\otimes{n}}$ being bipartite product requires at least $\Omega \left (n / \log n \right )$ copies of the input state $\ket{\psi}$, for any $0 < \epsilon \leq \frac{1}{2}$.
\end{restatable}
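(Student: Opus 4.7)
\emph{Proof plan.} I would adopt a standard two-point indistinguishability argument. Define a ``yes'' ensemble $\mathcal{Y}$ by drawing a uniformly random $S \subset [n]$ with $|S| = \lfloor n/2 \rfloor$ together with independent Haar-random pure states $\ket{\phi_S}, \ket{\phi_{\bar S}}$ on the two halves, and outputting the BP state $\ket{\phi_S} \otimes \ket{\phi_{\bar S}}$. For the ``no'' ensemble $\mathcal{N}$ take a Haar-random pure state on $(\mathbbm{C}^d)^{\otimes n}$. By L\'evy's lemma the reduced density matrix on any fixed subset is exponentially close to maximally mixed, so the state is $(1-o(1))$-far from any pure state product across that fixed bipartition; a union bound over the $2^n - 2$ non-trivial bipartitions still leaves a doubly exponentially small failure probability, so samples from $\mathcal{N}$ are $\epsilon$-far from every BP state with probability $1 - o(1)$. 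Any $\epsilon$-tester using $k$ copies with success probability $\geq 2/3$ must discriminate the two ensembles, forcing $\|\bar\rho_{\mathcal{Y}}^{(k)} - \bar\rho_{\mathcal{N}}^{(k)}\|_1 = \Omega(1)$, where $\bar\rho^{(k)}_{\mathcal{E}} := \mathbb{E}_{\ket\psi \sim \mathcal{E}}[\ket\psi\!\bra\psi^{\otimes k}]$; it therefore suffices to show $\|\bar\rho_{\mathcal{Y}}^{(k)} - \bar\rho_{\mathcal{N}}^{(k)}\|_1 = o(1)$ whenever $k = o(n/\log n)$.

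Routine Haar integration gives $\bar\rho^{(k)}_{\mathcal{N}} = \Pi^{(d^n,k)}_{\mathrm{sym}}/D_n$ where $D_n := \binom{d^n + k - 1}{k}$, and $\bar\rho^{(k)}_{\mathcal{Y}} = \mathbb{E}_S \sigma_S^{(k)}$ where $\sigma_S^{(k)} := (\Pi^{(S,k)}_{\mathrm{sym}}/D_{S,k}) \otimes (\Pi^{(\bar S, k)}_{\mathrm{sym}}/D_{\bar S, k})$ with $D_{T,k} := \binom{d^{|T|}+k-1}{k}$. As each $\sigma_S^{(k)}$ is supported on the global symmetric subspace (independent $S$- and $\bar S$-symmetry implies joint symmetry), $\text{Tr}(\bar\rho^{(k)}_{\mathcal{Y}} \bar\rho^{(k)}_{\mathcal{N}}) = 1/D_n = \|\bar\rho^{(k)}_{\mathcal{N}}\|_2^2$, so
\[
\|\bar\rho^{(k)}_{\mathcal{Y}} - \bar\rho^{(k)}_{\mathcal{N}}\|_2^2 = \|\bar\rho^{(k)}_{\mathcal{Y}}\|_2^2 - \frac{1}{D_n}.
\]
Both states have rank at most $D_n$, so the rank bound $\|A\|_1 \leq \sqrt{2 D_n}\,\|A\|_2$ reduces the task to establishing $\|\bar\rho^{(k)}_{\mathcal{Y}}\|_2^2 \leq (1 + o(1))/D_n$. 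Splitting into diagonal and off-diagonal parts with $N := \binom{n}{\lfloor n/2 \rfloor}$,
\[
\|\bar\rho^{(k)}_{\mathcal{Y}}\|_2^2 = \frac{1}{N}\, \text{Tr}\bigl((\sigma_S^{(k)})^2\bigr) + \Bigl(1 - \frac{1}{N}\Bigr)\,\mathbb{E}_{S \neq S'}\text{Tr}(\sigma_S^{(k)} \sigma_{S'}^{(k)}).
\]
The diagonal equals $1/(N D_{S,k} D_{\bar S, k}) \approx k!/(N D_n)$ for $|S| = \lfloor n/2 \rfloor$, which is $o(1/D_n)$ precisely when $k! = o(N) \approx o(2^n)$, that is, $k = o(n/\log n)$ by Stirling.

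The main technical obstacle is then to show $\mathbb{E}_{S \neq S'}\text{Tr}(\sigma_S^{(k)} \sigma_{S'}^{(k)}) \leq (1 + o(1))/D_n$. Using the permutation expansion $\Pi^{(D,k)}_{\mathrm{sym}} = (1/k!)\sum_{\pi \in S_k} W(\pi)$ and the cycle identity $\text{Tr}(W(\pi)) = d^{c(\pi)}$ (with $c(\pi)$ the number of cycles of $\pi$), expanding the pairwise overlap yields a quadruple permutation sum weighted by $d^{|A|\,c(\pi_1 \pi_2) + |B|\,c(\pi_1 \tau_2) + |C|\,c(\tau_1 \pi_2) + |D'|\,c(\tau_1 \tau_2)}$, where $A = S \cap S'$, $B = S \cap \bar S'$, $C = \bar S \cap S'$, $D' = \bar S \cap \bar S'$. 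For $S,S'$ drawn independently uniform the four region sizes concentrate around $(n/4, n/4, n/4, n/4)$. The main combinatorial challenge is to carry out this cycle-structure estimate: individual traces can be as large as $d^{k|\text{region}|}$, so the proof must exploit both the exponential suppression from large mismatch regions $B, C$ and the fact that most permutations have few cycles, pushing the average overlap down to the $1/D_n$ value that $\bar\rho^{(k)}_{\mathcal{N}}$ itself would give. The $\log n$ factor in the final bound arises directly from the matching $k! \sim N$ threshold on the diagonal and cannot be improved by this ensemble choice.
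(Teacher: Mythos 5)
Your strategy is essentially the one the paper uses: the same two ensembles (Haar-random global state vs.\ random bipartition with independent Haar halves), the same reduction of tester existence to an $\Omega(1)$ trace-distance lower bound between the $k$-copy averaged states, the same identification of those averages with normalised symmetric-subspace projectors, the same 2-norm/rank trick, and the same observation that $\mathrm{Tr}(\bar\rho_{\mathcal{Y}}\bar\rho_{\mathcal{N}}) = 1/D_n$ so everything reduces to bounding $\|\bar\rho_{\mathcal{Y}}\|_2^2$. Your diagonal computation and the resulting $k! \sim 2^n$ threshold are correct and match where the paper's $\log n$ loss comes from.

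However, you explicitly leave the central estimate --- $\mathbb{E}_{S\neq S'}\mathrm{Tr}(\sigma^{(k)}_S\sigma^{(k)}_{S'}) \leq (1+o(1))/D_n$ --- as an acknowledged ``main combinatorial challenge,'' and this is precisely where the bulk of the paper's technical work lives, so the proposal is incomplete as a proof. The paper closes this gap as follows: after the quadruple permutation expansion with weights $d^{|S\cap T|c(\alpha\gamma) + |S\cap T^c|c(\alpha\delta) + |S^c\cap T|c(\beta\gamma) + |S^c\cap T^c|c(\beta\delta)}$, a change of variables reduces four permutation averages to three; the triple $(\alpha,\gamma,\delta)=(e,e,e)$ is split off and contributes exactly the $1/D_n$ main term (which cancels against $\mathrm{Tr}(\rho^2)$ up to $O(k^2/d^n)$); and for every other triple \emph{at least two} of the four cycle numbers are $\leq k-1$, so each remaining term is at most $d^{(k-1)n}\,\mathbb{E}_{S,T}[d^{|S\cap T|+|S\cap T^c|}] = d^{(k-1)n}\bigl(\tfrac{1+d}{2}\bigr)^n$, giving an overall $(k!)^{O(1)}\bigl(\tfrac{1+d}{2d}\bigr)^n$ correction that vanishes when $k\log k = o(n)$. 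Note this last expectation is computed exactly by coordinate-wise independence because $S,T$ are uniform over \emph{all} subsets; your restriction to balanced bipartitions forfeits that independence (the region sizes become hypergeometric) and also forces you to handle near-equal and near-complementary pairs $S,S'$ separately, so the paper's unrestricted ensemble is the cleaner choice. A second, smaller gap: the tester is only guaranteed to reject states that are actually $\epsilon$-far, so you cannot feed it the raw Haar ensemble; you must condition on the high-probability far event and then show the conditioned $k$-copy average is within $o(1)$ trace distance of the unconditioned one (the paper's Fact 7), rather than just noting the event has probability $1-o(1)$.
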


So testing bipartite productness across an unknown bipartition is harder than testing both multipartite productness or productness across a known partition (both can be done with $O(1/\epsilon^2)$ copies), hence it appears that the uncertainty regarding which partition the state is product across is responsible for the increase in hardness.  We now comment on two initial applications of our result. 

Recall that \textit{graph states} \cite{hein2006entanglement, hein2004multiparty} are defined by associating a qubit initialised in the $\ket{+}$ state for every node, and applying a controlled-$Z$ gate for every corresponding edge. Given a graph state, one can consider testing classical properties of the underlying graph \cite{goldreich2010introduction} using few copies of the state \cite{montanaro2022quantum, zhao2016fast}. In particular, our work here relates to the property of the underlying graph being \textit{connected}: if there exists a path from any vertex to any other vertex. The underlying graph is not connected if and only if the associated state is bipartite product. Therefore our results imply that any attempt to test non-connectivity of the underlying graph by testing if the state is bipartite product must use $\Omega(n / \log n)$ copies. However, it is not ruled out that one could test for non-connectivity using fewer copies, taking advantage of the information that the given state is promised to be a graph state.

As a second application, consider the following quantifier of multipartite entanglement 

\begin{align}
E_G(\ket{\psi}) &:= 1- \max_{\ket{\phi} \text{ is BP}} \abs{\braket{\psi}{\phi}}^2 \\
&= \min_{\ket{\phi} \text{ is BP}} ~ D \left ( \ketbra{\psi}, \ketbra{\phi} \right )^2
\end{align}
for $D$ the trace distance. This is known as the \textit{generalised geometric measure of entanglement} -- see \cite{sen2010channel, das2016generalized, ma2023multipartite} and references therein. Thus we can reinterpret our main result as showing that to determine if either $E_G(\ket{\psi})=0$ or $E_G(\ket{\psi}) \geq \epsilon^2$ (given the promise that one of them holds), one requires $\Omega(n / \log n)$ copies of $\ket{\psi}$, for any $0 < \epsilon\leq \frac{1}{2}$. So in general one can expect computing $E_G$ to require at least this many copies.

Our proof of \cref{thm:main} proceeds in several steps:
\begin{enumerate}[(i)]
    \item We first show in \cref{lem:existenceimplies} that if a tester exists, then this places a lower bound on the trace distance of certain quantum states. These quantum states are respectively close to distributions over BP and $\epsilon$-far from BP states.
    \item We then give an upper bound on the trace distance between these states as a function of $n$ (the number of parties), $k$ (the number of copies) and $d$ (the local dimension) -- this is \cref{lem:main_td}.
    \item Finally, we see that unless $k = \Omega (n / \log n)$, then this upper bound goes to zero, which contradicts the existence of a tester.
\end{enumerate}

The bulk of the technical work is in proving point (ii), which requires calculations involving the Haar measure, symmetric subspace, and permutation matrices -- see e.g. \cite{hayden2006aspects,harrow2013church, mele2023introduction} for relevant literature. 

\begin{figure}
    \centering
    \includegraphics[width=\textwidth]{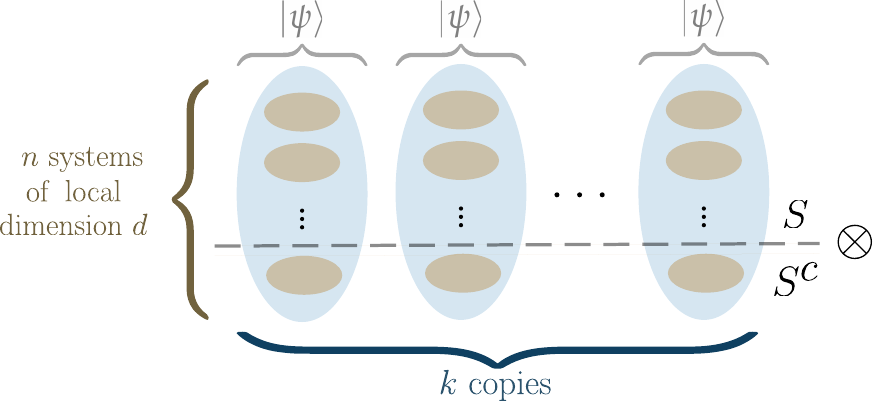}
    \caption{Illustration of the property considered in this paper. The input is given by a quantum state $\ket{\psi} \in (\mathbbm{C}^d)^{\otimes n}$, which is either product across some bipartiton $S:S^c$, or $\epsilon$-far from being product. We are interested in algorithms for distinguishing these cases that use a small number of copies $k$ of the input state $\ket{\psi}$. Some of the technical aspects of our work (e.g. \cref{lem:main_td}) involve unitaries $U_\alpha$ that permute the $k$ systems according to a permutation $\alpha \in \mathcal{S}_k$. Note that in the context of this diagram, these permutations $U_\alpha$ permute the $k$ `columns', and not the $n$ `rows', and hence given some bipartion $S \subset [n]$ we can write $U'_\alpha = U_\alpha \otimes_S U_\alpha$  for $U'_\alpha$ acting on the whole space --  see also \cref{eq:tensor_product_bipartition}.}
    \label{fig:nkd}
\end{figure}

\subsection{Mathematical Preliminaries}

We use $D(\rho, \sigma) := \tfrac{1}{2} \norm{\rho - \sigma}_1$ to denote trace distance, $\binom{n}{k} := \tfrac{n!}{(n-k)!k!}$ the binomial coefficients, and $[n]:=\{0, \dots , n-1\}$ the set of integers from $0$ to $n-1$ inclusive. We write `$\ln$' for the natural logarithm and `$\log$' for the logarithm to base $2$.

Consider $k$ quantum systems of local dimension $d$, i.e. some state $\ket{\psi} \in (\mathbbm{C}^d)^{\otimes k}$. Define unitaries $U_\alpha$ that permute the $k$ systems for some permutation $\alpha$ in the symmetric group $\mathcal{S}_k$: 
\[
U_\alpha \ket{x_1, \dots , x_k} = \ket{x_{\alpha^{-1}(1)}, \dots , x_{\alpha^{-1}(k)}}. \label{eq:U_alpha}
\]
Note that $U_\alpha U_\beta = U_{\alpha \beta}$. One can then define the \textit{symmetric subspace} \cite{harrow2013church}  as follows:
\[
\text{Sym}^k_{d}:=\left \{\ket{\psi} \in (\mathbbm{C}^d)^{\otimes k} \quad : \quad U_\alpha \ket{\psi} = \ket{\psi} \quad  \forall ~ \alpha \in \mathcal{S}_k \right \},
\]
which can equivalently be defined as the span of states of the form $\ket{\psi} = \ket{\phi}^{\otimes k}$ for some $\ket{\phi}\in \mathbbm{C}^d$. We can write the projector $\Pi^k_d$ onto the symmetric subspace as
\[
\Pi^k_d := \mathop{\mathbbm{E}}_{\alpha \in \mathcal{S}_k} \left [ U_\alpha \right ] =  \frac{1}{k!} \sum_{\alpha \in \mathcal{S}_k} U_\alpha. \label{eq:symsub}
\]
Now let $d\psi$ denote the Haar measure on quantum states. Then a well-known fact \cite{harrow2013church, mele2023introduction} is that integration over $k$ copies of a state $\ket{\psi} \in \mathbbm{C}^d$ is proportional to the projector onto the symmetric subspace, specifically we have
\[
{k+d-1 \choose k } \int d\psi \ketbra{\psi}^{\otimes k} = \Pi^k_d . \label{eq:haar_symm_subspace}
\]

We use $S^c$ to denote the complement of a subset $S \subseteq [n]$, and $\abs{S}$ to denote the size of the set $S$, so in particular $\abs{S} + \abs{S^c} = n$. We denote the empty set by $\emptyset$.

Recall that any permutation $\alpha \in \mathcal{S}_k$ can be written as a product of disjoint cycles, which is unique up to reordering. We refer to the number of cycles in this cycle decomposition of a permutation $\alpha$ as the \textit{cycle number}, denoted $c(\alpha)$. For example, the (cyclic) permutation $(1 2 3) \in \mathcal{S}_3$ has cycle number $1$, and the identity permutation $e = (1) \dots (k) \in \mathcal{S}_k$ has cycle number equal to $k$.

For a multipartite quantum state $\ket{\psi} \in (\mathbbm{C}^d)^{\otimes n}$ that is product across some bipartition $S \subset [n]$, we may use labels on the states and tensor product symbol for clarity. For example, if the state $\ket{\psi} \in (\mathbbm{C}^d )^{\otimes n}$ is product across the bipartition $S:S^c$ with respective states $\ket{\phi}$ and $\ket{\tau}$, for $k$ copies we may write (see also \cref{fig:nkd})
\[
\ket{\psi}^{\otimes{k}} = \ket{\phi^S}^{\otimes k} \otimes_S \ket{\tau^{S^c}}^{\otimes k}, \label{eq:tensor_product_bipartition}
\]
and similarly for operators.

Finally, recall that \textit{Schmidt decomposition} allows us to write any bipartite state $\ket{\psi} \in \mathbbm{C}^{d_1} \otimes \mathbbm{C}^{d_2}$ as
\[
\ket{\psi} = \sum_i \gamma_i \ket{v_i} \ket{w_i},
\]
where the \textit{Schmidt coefficients} $\gamma_i$ are non-negative, the sets $\ket{v_i}$ and $\ket{w_i}$ are respectively orthonormal, and the number of terms in the expansion is minimal and referred to as the \textit{Schmidt rank}.

\section{Results}

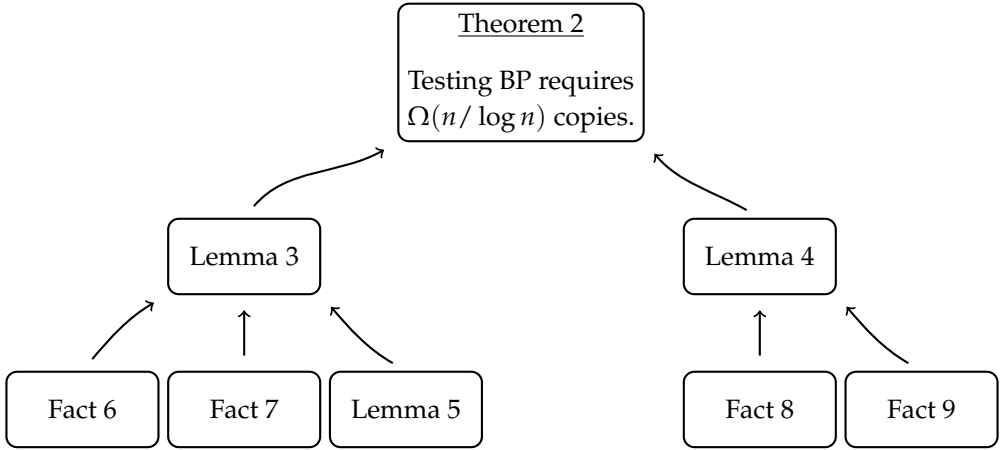
\begin{figure}[H]
    \centering

\begin{tikzpicture}[block/.style={rounded corners, minimum width=2cm, minimum height=1cm, draw}, line/.style={->,shorten >=0.4cm,shorten <=0.4cm},thick]

\node[block] (1) {\shortstack{\underline{\cref{thm:main}} \\[2ex] Testing BP requires \\ 
$\Omega(n /\log n)$ copies.}};

\node[block, below left=1cm and 1cm of 1] (2) {\cref{lem:existenceimplies}};

\node[block, below left=1cm and 0.1cm of 2] (4) {\cref{fact:schmidt}};
\node[block, below =1cm of 2] (5) {\cref{fact:haar}};
\node[block, below right=1cm and 0.1cm of 2] (6) {\cref{lem:schmidtprob}};

\node[block, below right=1cm and 0.5cm of 1] (3) {\cref{lem:main_td}};

\node[block, below =1cm of 3] (7) {\cref{fact:binbounds}};
\node[block, below right=1cm and 0.1cm of 3] (8) {\cref{fact:cycles}};

\draw [->, shorten <= 0.2cm,  shorten >= 0.2cm] (2.north) to[out=50,in=-150] (1.south west);
\draw [->, shorten <= 0.2cm,  shorten >= 0.2cm] (3.north) to[out=150,in=-50] (1.south east);

\draw [->, shorten <= 0.2cm,  shorten >= 0.2cm] (4.north) to[out=50,in=-150] (2.south west);
\draw [->, shorten <= 0.2cm,  shorten >= 0.2cm] (5.north) to (2.south);
\draw [->, shorten <= 0.2cm,  shorten >= 0.2cm] (6.north) to[out=150,in=-50] (2.south east);

\draw [->, shorten <= 0.2cm,  shorten >= 0.2cm] (7.north) to (3.south);
\draw [->, shorten <= 0.2cm,  shorten >= 0.2cm] (8.north) to[out=150,in=-50] (3.south east);

\end{tikzpicture}

\caption{Proof structure and supporting results.}
\label{fig:results}
\end{figure}

We first state the two main ingredients used in the proof of our main result.

\begin{restatable}{lemma}{existenceimplies}
    
\label{lem:existenceimplies}
For $0 < \epsilon\leq \frac{1}{2}$, the existence of an $\epsilon$-tester for the property of a multipartite state being bipartite product using $k$ copies implies that

\begin{align}
     D(\rho , \sigma) ~\geq~  \frac{1}{3}  - O(2^{-n}),
\end{align}
for $D$ the trace distance, and where
\[
\rho = \frac{ \Pi_{d^n}^k}{ {d^n + k -1 \choose k}}, \qquad \qquad
\sigma = 
\mathop{\mathbbm{E}}_{S \subseteq [n]} \bigg [ \frac{ \Pi_{d^{|S|}}^k \otimes_S \Pi_{d^{n-|S|}}^k }{{d^{\abs{S}} + k -1 \choose k} {d^{n - \abs{S}} + k -1 \choose k}} \bigg ],
\]
for $ \Pi_{d}^k$ the projector onto the symmetric subspace of $k$ systems of local dimension $d$.
\end{restatable}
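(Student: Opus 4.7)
The plan is to exploit the variational characterisation $D(\rho,\sigma) \geq \text{Tr}(M(\rho-\sigma))$ for any POVM element $0 \leq M \leq \id$, with $M$ chosen to be the ``accept'' element of the hypothetical $\epsilon$-tester. Using the Haar identity \eqref{eq:haar_symm_subspace} (packaged as \cref{fact:haar}), one can rewrite $\rho$ as the $k$-copy average $\mathop{\mathbbm{E}}_\psi (\ketbra{\psi})^{\otimes k}$ of a Haar-random state on $(\mathbbm{C}^{d})^{\otimes n}$, and $\sigma$ as the analogous average $\mathop{\mathbbm{E}}_{S}\mathop{\mathbbm{E}}_{\phi,\tau} (\ketbra{\phi}\otimes_S\ketbra{\tau})^{\otimes k}$ over a uniformly random $S \subseteq [n]$ and independent Haar-random factors on the two sides. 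For any non-trivial $S$, the state $\ket{\phi}\otimes_S\ket{\tau}$ is BP and the tester accepts it with probability at least $2/3$; the two trivial values $S \in \{\emptyset,[n]\}$ contribute only weight $2/2^n$ to the expectation, so linearity immediately gives $\text{Tr}(M\sigma) \geq 2/3 - O(2^{-n})$.

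The substantive step is to upper bound $\text{Tr}(M\rho)$. My strategy is to split the Haar distribution on $(\mathbbm{C}^{d})^{\otimes n}$ into states that are $\epsilon$-close to some BP state and states that are $\epsilon$-far from every BP state: the former contribute at most $1$ to the acceptance probability, the latter at most $1/3$ by the tester guarantee. It therefore suffices to show that a Haar-random state is $\epsilon$-close to some BP state only with probability $O(2^{-n})$. Via Schmidt decomposition (\cref{fact:schmidt}), the distance of $\ket{\psi}$ from product across a fixed bipartition $S$ equals $\sqrt{1-\gamma_{\max}(S)^{2}}$, so $\epsilon$-farness across $S$ amounts to $\gamma_{\max}(S)^{2} \leq 1-\epsilon^{2}$. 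I would then invoke a tail bound on the top Schmidt coefficient of a Haar-random bipartite pure state (the role of \cref{lem:schmidtprob}) to show that, for each $S$, this inequality fails only with doubly exponentially small probability in $n$, strong enough to absorb a union bound over the $2^{n-1}-1$ non-trivial bipartitions and leave a total failure probability of $O(2^{-n})$.

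Combining these estimates yields $\text{Tr}(M(\rho-\sigma)) \geq 2/3 - 1/3 - O(2^{-n}) = 1/3 - O(2^{-n})$, which is the claimed inequality. The main obstacle I anticipate is calibrating the Schmidt concentration bound in the most unbalanced case $|S|=1$, where the reduced density matrix acts on only $d$ dimensions. There the expected largest Schmidt coefficient squared is asymptotically $1/d$, safely below $1-\epsilon^{2}$ for $\epsilon \leq \tfrac{1}{2}$ and $d \geq 2$, while Levy-style concentration on the sphere of dimension $d^{n}$ provides a deviation bound of order $\exp(-\Omega(d^{n}))$. Once this has been verified for the worst-case bipartition, the union bound and the tester argument are routine.
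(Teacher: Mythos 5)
Your proposal is correct and follows essentially the same route as the paper: both rewrite $\rho$ and $\sigma$ as $k$-copy Haar averages via \cref{eq:haar_symm_subspace}, absorb the trivial bipartitions into an $O(2^{-n})$ correction, and use \cref{fact:schmidt} together with the Schmidt-coefficient tail bound of \cref{lem:schmidtprob} and a union bound to show a Haar-random state is $\epsilon$-far from BP except with probability $O(2^{-n})$ --- the paper merely routes these error terms through auxiliary states $\rho',\sigma'$ and the triangle inequality on trace distances instead of directly through acceptance probabilities as you do. The only blemish is the sign in your last line: your estimates give $\text{Tr}\left(M(\sigma-\rho)\right) \geq \tfrac{1}{3} - O(2^{-n})$, which suffices since the trace distance is symmetric.
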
 

\begin{restatable}{lemma}{rhosigdist}\label{lem:main_td}
Consider the following states 
\[
\rho = \frac{ \Pi_{d^n}^k}{ {d^n + k -1 \choose k}}, \qquad \qquad
\sigma = 
\mathop{\mathbbm{E}}_{S \subseteq [n]} \bigg [ \frac{ \Pi_{d^{|S|}}^k \otimes_S \Pi_{d^{n-|S|}}^k }{{d^{\abs{S}} + k -1 \choose k} {d^{n - \abs{S}} + k -1 \choose k}} \bigg ],
\]
where $ \Pi_{d}^k$ denotes the projector onto the symmetric subspace of $k$ systems of local dimension $d$. Then their squared trace-distance is upper bounded by the following expression:
    \[
     D(\rho, \sigma)^2 \leq \frac{k!}{4} \bigg (1 + (k!)^3\left (\frac{1+d}{2d} \right )^n -  e^{-k^2 / d^{n}}   \bigg  ).
    \]
\end{restatable}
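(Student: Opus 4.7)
The plan is to apply a matrix Cauchy--Schwarz bound of the form $D(\rho,\sigma)^2 \le \tfrac{r}{4}\,\mathrm{Tr}\!\bigl((\rho-\sigma)^2\bigr)$ for a suitable effective rank $r$ of $\rho-\sigma$, and then to evaluate the Hilbert--Schmidt term explicitly using the permutation-operator representation of the symmetric subspace. A first reduction observes that $\mathrm{Sym}^k_{d^{|S|}}\otimes_S\mathrm{Sym}^k_{d^{n-|S|}}\subseteq\mathrm{Sym}^k_{d^n}$ for every $S\subseteq[n]$, so $\Pi^k_{d^n}\bigl(\Pi^k_{d^{|S|}}\otimes_S\Pi^k_{d^{n-|S|}}\bigr)=\Pi^k_{d^{|S|}}\otimes_S\Pi^k_{d^{n-|S|}}$; taking traces gives $\mathrm{Tr}(\rho\,\sigma_S)=1/\binom{d^n+k-1}{k}=\mathrm{Tr}(\rho^2)$ for every $S$, so averaging yields $\mathrm{Tr}((\rho-\sigma)^2)=\mathrm{Tr}(\sigma^2)-\mathrm{Tr}(\rho^2)$. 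A standard binomial estimate of the form $\binom{d^n+k-1}{k}\le \tfrac{d^{nk}}{k!}\,e^{k^2/d^n}$ (\cref{fact:binbounds}) then furnishes the lower bound $\mathrm{Tr}(\rho^2)\ge k!\,e^{-k^2/d^n}/d^{nk}$, which is what ultimately produces the $-e^{-k^2/d^n}$ term in the statement.

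The main computation is an upper bound on $\mathrm{Tr}(\sigma^2)=\mathop{\mathbbm{E}}_{S,T}\mathrm{Tr}(\sigma_S\sigma_T)$. Expanding each symmetric projector as $\Pi^k_{d^m}=\tfrac{1}{k!}\sum_{\alpha\in\mathcal{S}_k}U_\alpha$ and partitioning $[n]$ into the four overlap blocks $A=S\cap T$, $B=S\cap T^c$, $C=S^c\cap T$, $D=S^c\cap T^c$, the product $(U^S_\alpha\otimes_S U^{S^c}_\beta)(U^T_\gamma\otimes_T U^{T^c}_\delta)$ factorises blockwise as $U^A_{\alpha\gamma}\otimes U^B_{\alpha\delta}\otimes U^C_{\beta\gamma}\otimes U^D_{\beta\delta}$. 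The cycle--trace identity $\mathrm{Tr}(U_\pi)=d^{c(\pi)}$ on $(\mathbbm{C}^d)^{\otimes k}$ (\cref{fact:cycles}) then gives
\[
\mathrm{Tr}\!\bigl[(\Pi^k_{d^{|S|}}\!\otimes_S\Pi^k_{d^{|S^c|}})(\Pi^k_{d^{|T|}}\!\otimes_T\Pi^k_{d^{|T^c|}})\bigr] = \frac{1}{(k!)^4}\sum_{\alpha,\beta,\gamma,\delta\in\mathcal{S}_k} d^{|A|c(\alpha\gamma)+|B|c(\alpha\delta)+|C|c(\beta\gamma)+|D|c(\beta\delta)}.
\]
For independent uniform $S,T\subseteq[n]$ each index of $[n]$ lands in $(A,B,C,D)$ independently with probability $\tfrac14$ each, so $\mathop{\mathbbm{E}}_{S,T}\bigl[d^{|A|a+|B|b+|C|c+|D|e}\bigr]=\bigl(\tfrac{d^a+d^b+d^c+d^e}{4}\bigr)^{\!n}$. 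Combining with the denominator bound $\binom{d^m+k-1}{k}\ge d^{mk}/k!$ reduces $\mathrm{Tr}(\sigma^2)$ to a sum over tuples $(\alpha,\beta,\gamma,\delta)\in\mathcal{S}_k^4$ of terms of the form $\bigl(\tfrac{d^a+d^b+d^c+d^e}{4d^k}\bigr)^n$, each bounded by $\bigl(\tfrac{1+d}{2d}\bigr)^n$ once $d^{c(\pi)-k}\le 1$ is applied (with strict gain whenever $\pi\ne e$); the $(k!)^4$ cardinality of tuples then supplies the $(k!)^3$ factor in the statement, while the diagonal tuples contribute the leading term that ultimately cancels against $\mathrm{Tr}(\rho^2)$ to produce the ``$1$'' in the bound.

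The main obstacle I anticipate is pinning down the correct effective rank $r=k!$ so that the prefactor comes out as $k!/4$ rather than the naive $\binom{d^n+k-1}{k}/4$ from $\mathrm{rank}(\rho-\sigma)\le \dim\mathrm{Sym}^k_{d^n}$; this likely exploits that both $\rho$ and $\sigma$ commute with the global permutation action of $\mathcal{S}_k$ on the $k$ copies, restricting $\rho-\sigma$ to a $k!$-dimensional effective block via Schur--Weyl duality. The other delicate piece is the tuple-level bookkeeping that separates the diagonal contribution (cancelling $\mathrm{Tr}(\rho^2)$) from the off-diagonal contribution (uniformly controlled by $\bigl(\tfrac{1+d}{2d}\bigr)^n$); once both are handled, substitution into the Cauchy--Schwarz inequality yields the stated inequality.
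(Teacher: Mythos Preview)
Your plan is essentially the paper's proof, and all of your substantive computations (the inclusion $\mathrm{Sym}^k_{d^{|S|}}\otimes_S\mathrm{Sym}^k_{d^{n-|S|}}\subseteq\mathrm{Sym}^k_{d^n}$ giving $\mathrm{Tr}(\rho\sigma)=\mathrm{Tr}(\rho^2)$, the permutation expansion with the cycle--trace identity, the factorisation over the four overlap blocks, and the binomial bounds of \cref{fact:binbounds}) match the paper exactly.

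The one place you go astray is your ``main obstacle'': you do \emph{not} need an effective rank of $k!$, and the Schur--Weyl argument you sketch would not produce one (indeed $\sigma$ is not in the span of the $k!$ global permutation operators, since $U_\alpha^S\otimes_S U_\beta^{S^c}$ for $\alpha\neq\beta$ is not of that form). The paper simply uses the crude bound $\norm{A}_1\le\sqrt{d^{nk}}\,\norm{A}_2$ with the full ambient dimension $d^{nk}$. The $k!/4$ prefactor then falls out of the arithmetic: after applying $\binom{d^m+k-1}{k}\ge d^{mk}/k!$ to the four denominators you pick up a factor $(k!)^4/d^{2nk}$, while in the permutation sum the ``diagonal'' contribution (tuples with $\alpha\gamma=\alpha\delta=\beta\gamma=\beta\delta=e$, of which there are exactly $k!$, namely $(\alpha,\alpha,\alpha^{-1},\alpha^{-1})$) contributes $k!\,d^{nk}$. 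Multiplying by $d^{nk}/4$ gives precisely $k!/4$, which is the ``$1$'' term; the $-e^{-k^2/d^n}$ comes separately from the lower bound on $\mathrm{Tr}(\rho^2)$, not from a cancellation. The non-diagonal tuples (at most $(k!)^4$ of them, each with at least two of the four products non-identity---one checks that exactly one non-identity product is impossible) are each bounded by $d^{nk}\bigl(\tfrac{1+d}{2d}\bigr)^n$, yielding the $(k!)^3$ term after the same arithmetic. The paper streamlines this slightly by first substituting to reduce four permutation variables to three, but your four-variable bookkeeping reaches the same bound.
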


Using these two ingredients we can prove our main theorem.

\subsection{Proof of Main Result}

\mainthm*
\begin{proof}[Proof of \cref{thm:main}]

If a tester exists, by \cref{lem:existenceimplies} we have that
\begin{align}
    \frac{1}{3} &\leq  O(2^{-n}) + D(\rho , \sigma)
\end{align}

for the states $\rho$, $\sigma$ as stated. \cref{lem:main_td} then gives us

\begin{align}
D(\rho, \sigma)^2 & \leq  \frac{k!}{4} \bigg (1 + (k!)^3\left (\frac{1+d}{2d} \right )^n -  e^{-k^2 / d^{n}}   \bigg  ) \\
& \leq  \frac{k!}{4} \bigg ( (k!)^3\left (\frac{1+d}{2d} \right )^n +  O(k^2 / d^{n})  \bigg  ) \\
&\leq \frac{k^{4k}}{4} 2^{-n} \bigg  (1+\frac{1}{d} \bigg )^n + O(k^{k+2}d^{-n}),
\end{align}
where we used $k! \leq k^k$ and $1-e^{-k^2 / d^n} = O(k^2 / d^{n})$ (assuming $k^2 / d^n < 1 ~ \forall n$)\footnote{The latter can be seen from $1-e^{-f(n)} \leq f(n)  (1 - \frac{f(n)}{2!} + \frac{f(n)^2}{3!} - \dots  )   \leq f(n)  (1  + \frac{1}{3!} + \frac{1}{5!} + \dots  ) =  \sinh(1) f(n) = O(f(n))$ for all functions satisfying $0 \leq f(n) \leq 1$ $\forall n$.}. 

As the local dimension $d \geq 2$, we have $1+\frac{1}{d} \leq \frac{3}{2}$, and so

\begin{align}
D^2 &\leq \frac{k^{4k}}{4} 2^{-n} \left  (\tfrac{3}{2} \right )^n + O( k^{k+2}d^{-n}) \\
 &\leq O\left (k^{4k} \left (\tfrac{3}{4} \right )^n\right ) + O( k^{k+2}2^{-n}) \\
 &\leq O \bigg ( k^{4k} \left (\tfrac{3}{4} \right )^n \bigg ) \\
  &= O \bigg ( 2^{4 k \log k - n \log 4/3} \bigg ).
\end{align}

Thus after taking the square root we have
\begin{align}
    D \leq O \bigg ( 2^{ 2 k \log k -  a n } \bigg ),
\end{align}
where $a=\tfrac{1}{2}\log 4/3 \approx 0.208$. Now take $k < \frac{cn}{\log n}$, with $0 < c < \tfrac{a}{2}$. Then observe that
\begin{align}
    2 k \log k - a n  &< \frac{2cn}{\log n} \bigg ( \log c  + \log n - \log \log n \bigg ) - an \\
    &= (2c-a) n + (c \log c) \left (\frac{n}{\log n} \right) - c  \left (\frac{n \log \log n}{\log n} \right ) \\
    &< 0  \quad \text{for sufficiently large $n$, as $2c < a$.}
\end{align}
This means that $O  ( 2^{2k \log k - a n } )$ would tend to zero as $n$ goes to infinity.
This contradicts the assertion that a tester must satisfy
\begin{align}
    \frac{1}{3} &\leq  O(2^{-n}) + D(\rho , \sigma),
\end{align}
as the right hand side of the inequality would tend to zero as $n$ tends to infinity. Hence no tester exists unless $k\in \Omega(\tfrac{n}{\log n}).$

\end{proof}

We also state and prove the following result on the distribution of the maximum Schmidt coefficient under the Haar measure. Aside from potentially being of independent interest, it primarily serves as a crucial ingredient in \cref{lem:existenceimplies} where it is needed to show that the Haar distribution is close to the same distribution conditioned on states with bounded maximum Schmidt coefficient.

\begin{lemma} \label{lem:schmidtprob}
Let $\ket{\psi} \in (\mathbbm{C}^d)^{\otimes n}$ be drawn uniformly at random from the  Haar measure, and let $\Gamma_{\max}$ denote the maximum Schmidt coefficient over all non-trivial bipartitions. 

Let $\frac{\sqrt{3}}{2} \leq \gamma < 1$ be a constant. Then there exist positive constants $c_1$, $c_2$ and $N$ (in terms of $\gamma$ and $d$, expressions given below) such that for all $n > N$
\[
\mathbbm{P} \left (\Gamma_{\max} > \gamma \right ) \leq c_1 2^{n} e^{- c_2 d^n}.
\]
Here $c_1, c_2, N$ are  given by
\begin{align}
    c_1 &=  \tfrac{1}{2} \left ( \frac{30}{\gamma^2} \right )^{2d}, \\
    c_2 &= \frac{d \gamma^4}{126 \ln 2}, \\
     N &=  \frac{1}{\ln d} \ln ( \frac{252 \ln 2 \ln (\frac{30}{\gamma^2})}{\gamma^4} ). 
\end{align}

\end{lemma}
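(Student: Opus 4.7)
The plan is to apply a union bound over all non-trivial bipartitions and, for each one, bound the tail of $\lambda_{\max}(\rho_S)$ by combining an $\epsilon$-net on the smaller side of the Schmidt cut with a Beta-distribution tail inequality. Since $\Gamma_{\max}^2 = \max_S \lambda_{\max}(\rho_S)$ for $\rho_S := \text{Tr}_{S^c}\ketbra{\psi}$ (ranging over non-trivial $S \subsetneq [n]$), and since $\rho_S$ and $\rho_{S^c}$ share eigenvalues by Schmidt decomposition, I restrict attention to bipartitions with $|S| \leq n/2$, giving
\[
\mathbbm{P}(\Gamma_{\max} > \gamma) \leq \sum_{S:\, 1 \leq |S| \leq n/2} \mathbbm{P}(\lambda_{\max}(\rho_S) > \gamma^2),
\]
with at most $2^{n-1}$ terms.

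For a fixed bipartition with $|S| = s$, set $d_A := d^s$ and $d_B := d^{n-s}$. Using the variational formula $\lambda_{\max}(\rho_S) = \max_\phi \bra{\phi}\rho_S\ket{\phi}$ over unit vectors $\phi \in \mathbbm{C}^{d_A}$, I take a $\delta$-net $\mathcal{N}_\delta$ of the unit sphere of $\mathbbm{C}^{d_A}$ of cardinality at most $(5/\delta)^{2 d_A}$, yielding the standard sandwich $\lambda_{\max}(\rho_S) \leq (1-2\delta)^{-1} \max_{\phi \in \mathcal{N}_\delta} \bra{\phi}\rho_S\ket{\phi}$. For each fixed $\phi$, the scalar $\bra{\phi}\rho_S\ket{\phi} = \|(\bra{\phi} \otimes \id)\ket{\psi}\|^2$ is the squared norm of the projection of the Haar-random $\ket{\psi}$ onto a $d_B$-dimensional subspace, so it follows a $\mathrm{Beta}(d_B, d^n - d_B)$ distribution with mean $1/d_A$. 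A Chernoff-type tail bound (e.g.\ via Pinsker's inequality applied to the binary KL divergence) then gives
\[
\mathbbm{P}(\bra{\phi}\rho_S\ket{\phi} > t) \leq \exp\bigl(-C d^n (t - 1/d_A)^2\bigr)
\]
for $t > 1/d_A$ and $C$ a positive absolute constant. Setting $t = \gamma^2(1-2\delta)$ and using $1/d_A \leq 1/d \leq 1/2 \leq \gamma^2$ to lower-bound $t - 1/d_A = \Omega(\gamma^2)$, a union bound over $\mathcal{N}_\delta$ yields $\mathbbm{P}(\lambda_{\max}(\rho_S) > \gamma^2) \leq (5/\delta)^{2 d_A} \exp(-c' d^n \gamma^4)$ for some $c' > 0$. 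Tuning $\delta$ as a constant multiple of $\gamma^2$ so that $5/\delta = 30/\gamma^2$ reproduces the explicit $30/\gamma^2$ appearing in $c_1$, while $(t-1/d_A)^2 = \Omega(\gamma^4)$ accounts for the $\gamma^4$ in $c_2$.

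The main obstacle is the $|S|$-dependence of the net factor: $(30/\gamma^2)^{2 d^s}$ is much larger at $s = n/2$ than the stated $n$-independent prefactor $c_1 = \tfrac{1}{2}(30/\gamma^2)^{2d}$, which matches only the $s=1$ case. This is absorbed by observing that for $s \geq 2$ the Beta mean $1/d^s$ shrinks further, strengthening the tail to $\exp(-\Omega(d^n(\gamma^2 - 1/d^s)^2))$ and gaining an extra factor $\exp(-\Omega(d^n))$ beyond the $s=1$ bound; this dominates the excess net size $(30/\gamma^2)^{2(d^s - d)}$ once $n$ exceeds the threshold $N = N(\gamma, d)$ given in the statement. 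Summing the resulting per-bipartition bounds over the $\leq 2^n$ non-trivial bipartitions then produces the claimed estimate of the form $c_1 \cdot 2^n \cdot e^{-c_2 d^n}$, with the specific constants determined by the chosen net radius and the absolute constants in the Beta tail bound.
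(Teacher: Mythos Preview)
Your overall strategy---union bound over bipartitions, net on the smaller side, concentration for each net vector---is exactly the skeleton of the paper's proof, but the paper does not rebuild the net/concentration step from scratch: it simply quotes the Hayden--Harrow bound
\[
\mathbb{P}\Bigl(\lambda_{\max} > \tfrac{1+\delta}{d_A}\Bigr) \le \Bigl(\tfrac{10 d_A}{\delta}\Bigr)^{2d_A}\exp\Bigl(-\tfrac{d_B\delta^2}{14\ln 2}\Bigr),
\]
substitutes $\delta = d_A\gamma^2-1$, and uses $\gamma\ge \sqrt{3}/2$ to get $\delta \ge d_A\gamma^2/3$. This immediately gives the per-cut bound in the form $\exp\bigl(d^x(a-b\,d^n)\bigr)$ with $a=2\ln(30/\gamma^2)$, $b=\gamma^4/(126\ln 2)$, and the observation that once $a-b\,d^n<0$ this is \emph{monotone decreasing} in $d^x$, so the worst bipartition is transparently $x=1$. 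That single monotonicity step is what produces the stated $c_1,c_2,N$ cleanly.

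Your reconstruction via the Beta law and Pinsker is correct in spirit but loses precisely this structure. Pinsker gives an exponent $\propto d^n(\gamma^2-1/d_A)^2$, which is essentially $\Theta(d^n)$ for every $s$, not $\Theta(d^{n+x})$; you then have to argue separately that the $O(1)\cdot d^n$ improvement in the exponent between $s=1$ and $s\ge 2$ beats the $\exp(O(d^s))$ growth of the net. That argument does go through for large $n$ (since $d^{n/2}=o(d^n)$), but it is more delicate than you indicate, and the resulting threshold and the constants you obtain will not be the specific $c_1,c_2,N$ in the lemma statement---those numbers ($30$, $126$, $252$) come directly from the $10$, $14$, and the factor $3$ in the Hayden--Harrow inequality, not from a Beta/Pinsker computation. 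So your route proves a lemma of the same shape, adequate for the downstream $\Omega(n/\log n)$ bound, but not the lemma as literally stated.
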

\begin{proof}
    From \cite{hayden2006aspects, harrow2004superdense} we have for $\lambda_{\max}$ the maximum eigenvalue of either reduced density matrix of a Haar random state:
    \[
    \mathbbm{P} \left (\lambda_{\text{max}} > \frac{1+\delta}{d_A} \right ) \leq \left ( \frac{10 d_A}{\delta} \right )^{2d_A} \exp( -d_B \frac{\delta^2}{14\ln 2}),
    \]
where $d_A$ and $d_B$ are the local dimensions across a fixed bipartition. Recall that the Schmidt coefficients of a bipartite pure state are equal to the square roots of the eigenvalues of either reduced density matrix. We then use the relabelling $\frac{1 + \delta}{d_A} = \gamma^2 \iff \delta = d_A \gamma^2 -1$ to write the above as

\begin{align}
\mathbbm{P} \left (\lambda_{\max} > \gamma^2 \right ) &= \mathbbm{P} \left (\gamma_{\max} > \gamma \right ) \\ 
&\leq \left ( \frac{10 d_A}{d_A \gamma^2-1} \right )^{2d_A} \exp( -d_B \frac{(d_A \gamma^2-1)^2}{14\ln 2}),
\end{align}
for $\gamma_{\max}$ the maximum Schmidt coefficient across this bipartition.

As we are considering $n$ parties each with local dimension $d$, set $d_A=d^x$, where $x \leq \tfrac{n}{2}$, and $d_B = d^{n-x}$. Note also that for $\gamma \geq \frac{\sqrt{3}}{2}$ and $d_A \geq 2$, we have

\begin{alignat}{3}
d_A \gamma^2  \geq ~ \frac{3}{2} \qquad \iff \qquad   d_A\gamma^2 -1 \geq \frac{d_A\gamma^2}{3}.
\end{alignat}

Hence we can write

\begin{align}
   \mathbbm{P} \left (\gamma_{\max} > \gamma\right ) &\leq \left ( \frac{30 }{\gamma^2} \right )^{2\cdot d^x} \exp( - \frac{d^{n+x}\gamma^4}{126 \ln 2})  \\
        & = \exp(d^x (a - b d^{n})),
\end{align}
for positive constants 
\begin{align}
    a = 2 \ln ( \frac{30}{\gamma^2}), \qquad\qquad
    b = \frac{\gamma^4}{126 \ln 2}.
\end{align}

For $a - b d^n \leq 0$, the worst case is for $x=1$ (when $d^x$ is smallest, and $d^x (a - b d^{n})$ is largest). So for $n$ sufficiently large we have that
\[
\exp(d^x (a - b d^{n})) \leq \exp(d (a - b d^{n})) .
\]

Taking the union bound over all $2^{n-1}-1$ nontrivial bipartitions then gives

\begin{align}
\mathbbm{P} \left (\Gamma_{\max} > \gamma \right ) &\leq 2^{n-1} \exp(d (a - b d^{n}))  \\
&\equiv c_1 2^n e^{-c_2 d^n},
\end{align}
for $\Gamma_{\max}$ the maximum Schmidt coefficient over all bipartitions, and where
\begin{align}
    c_1 = \tfrac{1}{2} e^{d a} =  \tfrac{1}{2} \left ( \frac{30}{\gamma^2} \right )^{2d}, \qquad\qquad
    c_2 =  d b = \frac{d \gamma^4}{126 \ln 2}.
\end{align}

Finally, note that we can rewrite the condition
$a - b d^n \leq 0$ as

\begin{align}
     n \geq \frac{\ln (\frac{a}{b})}{\ln(d)} &= \frac{\ln (\frac{\ln 2c_1 }{c_2})} {\ln(d)} = \frac{1}{\ln d} \ln ( \frac{252 \ln 2 \ln (\frac{30}{\gamma^2})}{\gamma^4} ). 
\end{align}

\end{proof}

\subsection{Proof of \cref{lem:existenceimplies}}

\existenceimplies*
    
\begin{proof}
Suppose there is a tester for the property of being bipartite product (BP) using $k$ copies of the input state. This means that there exists an operator (a POVM element) $M : (\mathbbm{C}^d) ^{\otimes kn} \rightarrow (\mathbbm{C}^d )^{\otimes kn}$ with $0 \leq M \leq \id$, such that for all inputs $\ket{\psi} \in (\mathbbm{C}^d )^{\otimes n}$ we have the following.
\begin{itemize}
    \item If $\ket{\psi}$ is BP then $\text{Tr}(M ~ \ketbra{\psi}^{\otimes k})\geq \frac{2}{3}$.
    \item If $\ket{\psi}$ is $\epsilon$-far from being BP then $\text{Tr}(M ~ \ketbra{\psi}^{\otimes k})\leq \frac{1}{3}$.
\end{itemize}

This implies that for any $\ket{\psi}$ that is BP, and any $\ket{\phi}$ that is $\epsilon$ far from being BP we have
\[
\text{Tr} \left (M \left ( \ketbra{\psi}^{\otimes k} - \ketbra{\phi}^{\otimes k}\right ) \right) \geq \frac{1}{3}.
\]
By linearity, this must also hold if we replace the states with averages, respectively according to any distribution $\mathcal{D}_{BP}$ on BP states, and any distribution $\mathcal{D}_F$ on states $\epsilon$-far from being BP. The variational characterisation of the trace distance then also allows us to write
\begin{align}
    \frac{1}{3} &\leq  \text{Tr} \left (M \left ( \mathbbm{E}_{\psi \sim \mathcal{D}_{BP}}(\ketbra{\psi}^{\otimes k}) - \mathbbm{E}_{\phi \sim \mathcal{D}_{F}}(\ketbra{\phi}^{\otimes k})\right ) \right), \\
    &\leq D \bigg (\mathbbm{E}_{\psi \sim \mathcal{D}_{BP}}(\ketbra{\psi}^{\otimes k}) ~ , ~ \mathbbm{E}_{\phi \sim \mathcal{D}_{F}}(\ketbra{\phi}^{\otimes k}) \bigg ).
\end{align}

We will take $\mathcal{D}_{BP}$ as the distribution defined by taking a random non-trivial bipartition of the $n$ parties, and then randomising over pure states on each subsystem using the Haar measure. More concretely, for some subset $S \subseteq [n]$ denote the normalised states
\begin{align}
\tau_S &= \bigg ( \int d \theta \ketbra{\theta}^{\otimes k} \bigg ) \otimes_S \bigg ( \int d \omega \ketbra{\omega}^{\otimes k} \bigg ) \\
&= \frac{ \Pi_{d^{|S|}}^k \otimes_S \Pi_{d^{n-|S|}}^k }{{d^{\abs{S}} + k -1 \choose k} {d^{n - \abs{S}} + k -1 \choose k}},
\end{align}

where $\Pi^k_d$ is the projector onto the symmetric subspace -- see \cref{eq:symsub} and \cref{eq:haar_symm_subspace}. Hence as a distribution over BP states we take 
\[
\sigma' := \mathbbm{E}_{\psi \sim \mathcal{D}_{BP}}(\ketbra{\psi}^{\otimes k}) = \mathop{\mathbbm{E}}_{\substack{S \subseteq [n]\\ S \neq \emptyset, [n]}} \bigg [\tau_S \bigg ].
\]

This state is $O(2^{-n})$ close in trace distance to the state $\sigma$ that includes the trivial bipartions in the average, as seen by the following calculation (using the triangle inequality).

\begin{align}
\norm{\sigma - \sigma'}_1 &=
    \norm{\mathop{\mathbbm{E}}_{S \subseteq [n]} \bigg [ \tau_S \bigg ] - \mathop{\mathbbm{E}}_{\substack{S \subseteq [n]\\ S \neq \emptyset, [n]}} \bigg [\tau_S \bigg ]}_1 \\
    &= \norm{\frac{1}{2^n} \left (\tau_\emptyset + \tau_{[n]} \right ) +  \left (\frac{1}{2^n} - \frac{1}{2^{n}-2} \right )\sum_{{\substack{S \subseteq [n]\\ S \neq \emptyset, [n]}}} \tau_S }_1 \\
    &\leq \frac{1}{2^n}  \left ( \norm{\tau_\emptyset}_1 + \norm{\tau_{[n]}}_1 \right )  +  \abs{\frac{1}{2^n} - \frac{1}{2^{n}-2} }\sum_{{\substack{S \subseteq [n]\\ S \neq \emptyset, [n]}}} \norm{\tau_S }_1 \\
    &= \frac{1}{2^{n-1}} + (2^{n}-2)\abs{ \frac{1}{2^n} - \frac{1}{2^{n}-2}} \\
    &= \frac{1}{2^{n-2}}.
\end{align}

Now define $\mathcal{D}_F$ to be the Haar measure conditioned on the maximum Schmidt coefficient over all bipartitions being at most $\gamma = \sqrt{1-\epsilon^2}$. This guarantees that the output is at least $\epsilon$-far in trace distance from being BP by the following facts, with proof in \cref{app:schmidt}.

\begin{restatable}{fact}{schmidt}\label{fact:schmidt} ~

\begin{enumerate}[(i)]
    \item The maximum Schmidt coefficient $\gamma_{\max}$ of a bipartite state $\ket{\psi} \in \mathbbm{C}^{d_1} \otimes \mathbbm{C}^{d_2}$ is equal to 
    \[
    \max_{\substack{\ket{\alpha} \in \mathbbm{C}^{d_1} \\ \ket{\beta} \in \mathbbm{C}^{d_2}}}\abs{\bra{\psi} ~ \ket{\alpha} \ket{\beta}}.
    \]
    \item If a multipartite state $\ket{\psi} \in (\mathbbm{C}^d)^{\otimes n}$ has maximum Schmidt coefficient at most $\gamma$ across any nontrivial bipartition, then it must be at least $\epsilon$-far in trace distance from any bipartite product state, for $\epsilon = \sqrt{1 - \gamma^2}$.
    \end{enumerate}
\end{restatable}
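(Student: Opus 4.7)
\textbf{Proof Proposal for \cref{fact:schmidt}.}

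The plan is to prove (i) directly from the Schmidt decomposition and Cauchy-Schwarz, then deduce (ii) in a single line from the pure-state trace distance formula $D(\ketbra{\psi}, \ketbra{\phi}) = \sqrt{1 - |\braket{\psi}{\phi}|^2}$.

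For part (i), I would write the Schmidt decomposition $\ket{\psi} = \sum_i \gamma_i \ket{v_i}\ket{w_i}$, ordering so that $\gamma_1 = \gamma_{\max}$. For arbitrary unit vectors $\ket{\alpha}$ and $\ket{\beta}$, set $a_i = \braket{v_i}{\alpha}$ and $b_i = \braket{w_i}{\beta}$, so that
\[
\bra{\psi}\left(\ket{\alpha}\otimes\ket{\beta}\right) = \sum_i \gamma_i \, a_i \, b_i.
\]
Bessel's inequality (applied to the orthonormal sets $\{\ket{v_i}\}$ and $\{\ket{w_i}\}$) gives $\sum_i |a_i|^2 \leq 1$ and $\sum_i |b_i|^2 \leq 1$. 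Applying Cauchy-Schwarz as $|\sum_i (\gamma_i a_i) b_i| \leq (\sum_i \gamma_i^2 |a_i|^2)^{1/2} (\sum_i |b_i|^2)^{1/2}$ and then using $\gamma_i^2 \leq \gamma_{\max}^2$ yields the upper bound $\gamma_{\max}$. The choice $\ket{\alpha} = \ket{v_1}$, $\ket{\beta} = \ket{w_1}$ saturates this bound (the inner product evaluates to $\gamma_1 = \gamma_{\max}$), so the supremum coincides with $\gamma_{\max}$.

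For part (ii), any BP state $\ket{\phi}$ can be written as $\ket{\alpha^S}\otimes_S\ket{\beta^{S^c}}$ for some nontrivial $S \subset [n]$. Applying part (i) across this particular bipartition, $|\braket{\psi}{\phi}|$ is at most the maximum Schmidt coefficient of $\ket{\psi}$ across $S$, which by hypothesis is at most $\gamma$. The pure-state trace distance formula then gives $D(\ketbra{\psi}, \ketbra{\phi}) \geq \sqrt{1 - \gamma^2} = \epsilon$, and since this bound is uniform over the choice of $\ket{\phi}$ and $S$, the state $\ket{\psi}$ is at least $\epsilon$-far from every bipartite product state.

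There is no genuine obstacle here; the only point requiring any care is the Cauchy-Schwarz manipulation in (i), which must be arranged to isolate $\gamma_{\max}$ cleanly (rather than, say, producing the weaker bound $(\sum_i \gamma_i^2)^{1/2} = 1$). Once (i) is in hand, part (ii) is essentially immediate from the pure-state trace distance identity.
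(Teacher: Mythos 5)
Your proposal is correct and follows essentially the same route as the paper: Schmidt decomposition plus Cauchy--Schwarz for part (i), then the pure-state trace-distance/fidelity identity applied across the bipartition of the given product state for part (ii). The only cosmetic difference is that you fold $\gamma_i$ into the Cauchy--Schwarz step before bounding by $\gamma_{\max}$, whereas the paper factors out $\gamma_{\max}$ first and then applies Cauchy--Schwarz to $\sum_i\abs{a_i}\abs{b_i}$; both yield the same bound.
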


We also require the following fact, which intuitively states that if two distributions only disagree on a subset that occurs with small probability, then the distributions themselves will be close. We give proof in \cref{app:haar}.

\begin{restatable}{fact}{haarcloseness}\label{fact:haar}
    Let $H$ denote the Haar distribution, and $H_S$ be the Haar distribution conditioned on states belonging to some measurable set $S$. Let $p$ be the probability that a Haar random state does not belong to $S$, i.e. $p = 1-\int_{S} d\psi$.  Define the states

\begin{align}
    \rho & = \mathbbm{E}_{\psi \sim H}(\ketbra{\psi}^{\otimes k}), \\
    \rho' & = \mathbbm{E}_{\phi \sim H_S}(\ketbra{\phi}^{\otimes k}).
\end{align}

Then the trace distance between these states is at most $p$:
\[
D(\rho, \rho') \leq p.
\]

\end{restatable}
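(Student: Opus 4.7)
The plan is to realise both $\rho$ and $\rho'$ as Haar integrals over disjoint pieces and subtract directly, since the two states differ only on the excluded region $S^c$ and in an overall renormalisation. First I would split the unconditional Haar integral as
\[
\rho = \int_S d\psi \, \ketbra{\psi}^{\otimes k} + \int_{S^c} d\psi \, \ketbra{\psi}^{\otimes k},
\]
using measurability of $S$, and write $\rho' = (1-p)^{-1} \int_S d\psi \, \ketbra{\psi}^{\otimes k}$ directly from the definition of $H_S$.

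Subtracting then gives an expression of the form $\rho - \rho' = -\tfrac{p}{1-p} A + B$, where $A := \int_S d\psi \, \ketbra{\psi}^{\otimes k}$ and $B := \int_{S^c} d\psi \, \ketbra{\psi}^{\otimes k}$ are both positive semidefinite, as integrals of the rank-one projectors $\ketbra{\psi}^{\otimes k}$. For a PSD operator the trace norm equals the trace, and exchanging trace and integral yields $\mathrm{Tr}(A) = 1-p$ and $\mathrm{Tr}(B) = p$. A triangle inequality for $\|\cdot\|_1$ then finishes the proof:
\[
\|\rho - \rho'\|_1 \;\leq\; \tfrac{p}{1-p}\,(1-p) + p \;=\; 2p,
\]
so $D(\rho, \rho') \leq p$ as claimed.

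There is essentially no technical obstacle: the only point requiring a moment's care is noting that restricting a positive operator-valued integrand to a measurable subset preserves positivity, which is what lets us equate $\|A\|_1$ and $\|B\|_1$ with the Haar measures of $S$ and $S^c$ respectively. Everything else is a line or two of arithmetic.
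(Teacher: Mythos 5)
Your proof is correct and is essentially the same as the paper's: both decompose the difference according to $S$ versus $S^c$, use that the trace norm of a positive operator equals its trace, and apply the triangle inequality to get $\tfrac{p}{1-p}(1-p)+p=2p$. The only cosmetic difference is that the paper applies the triangle inequality inside the Haar integral rather than to the two lumped operators $A$ and $B$.
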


Now take $S$ as the set of states with maximum Schmidt coefficient at most $\gamma$. By \cref{lem:schmidtprob}, for $\frac{\sqrt{3}}{2} \leq \gamma \leq 1$ the probability that a Haar random state has maximum Schmidt coefficient greater than $\gamma$ is  at most $c_1 2^{n} e^{- c_2 d^n}$, where $c_1$ and $c_2$ are given in \cref{lem:schmidtprob}. Hence by \cref{fact:haar} the trace distance between the following states
\begin{align}
    \rho & = \mathbbm{E}_{\psi \sim H}(\ketbra{\psi}^{\otimes k}) \\
    \rho' & = \mathbbm{E}_{\phi \sim H_S}(\ketbra{\phi}^{\otimes k})
\end{align}
is at most $c_1 2^{n} e^{- c_2 2^n}$. Finally, \cref{fact:schmidt} tells us that if $\ket{\psi}$ has maximum Schmidt coefficient at most $\gamma$, then it is $\epsilon = \sqrt{1-\gamma^2}$ far in trace distance from any BP state. The condition $\frac{\sqrt{3}}{2} \leq \gamma \leq 1$ is equivalent to $0 \leq \epsilon \leq \frac{1}{2}$.

Thus in summary, by two applications of the triangle inequality the existence of an $\epsilon$-tester for bipartite productness, for $0 < \epsilon \leq \frac{1}{2}$, implies that

\begin{align}
    \frac{1}{3} &\leq D(\rho', \sigma') \\
    &\leq D(\sigma, \sigma') + D(\rho, \rho') + D(\rho, \sigma) \\
    &\leq  O(2^{-n}) + O(2^n e^{-c d^n}) + D(\rho , \sigma) \\
    &\leq  O(2^{-n}) + D(\rho , \sigma),
\end{align}
for $c>0$ a constant and where
\[
\rho = \frac{ \Pi_{d^n}^k}{ {d^n + k -1 \choose k}}, \qquad \qquad
\sigma = 
\mathop{\mathbbm{E}}_{S \subseteq [n]} \bigg [ \frac{ \Pi_{d^{|S|}}^k \otimes_S \Pi_{d^{n-|S|}}^k }{{d^{\abs{S}} + k -1 \choose k} {d^{n - \abs{S}} + k -1 \choose k}} \bigg ].
\]

\end{proof}

\subsection{Proof of \cref{lem:main_td}}

 \rhosigdist*

\begin{proof}

First, we use the following standard inequality to replace the 1-norm with the 2-norm, for any matrix $A\in\mathbbm{C}^{d \times d}$ 
\begin{align}
        \norm{A}_1 &\leq \sqrt{d} \norm{A}_2,
    \end{align}
    where $\norm{A}_p := \text{Tr} \left ( \abs{A}^p \right )^\frac{1}{p}$. So we have
\begin{align}
D(\rho, \sigma)^2 &= \frac{1}{4} \norm{\rho - \sigma}_1^2 \\
&\leq \frac{d^{nk}}{4} \norm{\rho - \sigma}_2^2 
= \frac{d^{nk}}{4} \text{Tr}\left ( (\rho - \sigma)^2\right ) \\
&= \frac{d^{nk}}{4} \bigg (\text{Tr} (\rho^2) + \text{Tr}(\sigma^2) - 2 \text{Tr}(\rho\sigma) \bigg ).
\end{align}
To calculate $\text{Tr}(\rho\sigma)$, we will now use the fact that
    \[
    \text{Sym}^k_{d_1} \otimes \text{Sym}^k_{d_2} \subseteq \text{Sym}^k_{d_1 d_2}.
    \]

To see this, take a state $\ket{\psi} \in \text{Sym}^k_{d_1} \otimes \text{Sym}^k_{d_2}$. Then by definition it is preserved under $U_\alpha \otimes U_\beta$ $\forall \alpha, \beta \in \mathcal{S}_k$, for $U_\alpha$ as defined in \cref{eq:U_alpha}. In particular, it is preserved when $\alpha=\beta$, and we have $U_\alpha \otimes U_\alpha = U'_\alpha$, where $U'_\alpha$ acts on the whole space. So $\ket{\psi}$ is in $\text{Sym}^k_{d_1 d_2}$ -- see also \cref{fig:nkd} for a visual aid. This implies the following relationship between the projectors onto these spaces
  \[
    \Pi^k_{d_1 d_2} \cdot \left ( \Pi^k_{d_1} \otimes \Pi^k_{d_2} \right ) =  \Pi^k_{d_1} \otimes \Pi^k_{d_2}.
    \]

Hence we have

\begin{align}
\text{Tr}(\rho \sigma) &= \mathop{\mathbbm{E}}_{S \subseteq [n]} \bigg [ \frac{\text{Tr} \left ( \Pi_{d^n}^k \cdot \Pi_{d^{|S|}}^k \otimes_S \Pi_{d^{n-|S|}}^k  \right ) }{{d^n + k -1 \choose k} {d^{\abs{S}} + k -1 \choose k} {d^{n - \abs{S}} + k -1 \choose k}} \bigg ] \\
&= \frac{1}{{d^n + k -1 \choose k} }  \mathop{\mathbbm{E}}_{S \subseteq [n]} \bigg [ \frac{\text{Tr} \left ( \Pi_{d^{|S|}}^k \otimes_S \Pi_{d^{n-|S|}}^k  \right )}{ {d^{\abs{S}} + k -1 \choose k} {d^{n - \abs{S}} + k -1 \choose k}} \bigg ]  \\
&= \frac{1}{{d^n + k -1 \choose k} } = \text{Tr}(\rho^2).
\end{align}

So altogether at this stage we have
\begin{align}
 D^2 &\leq \frac{d^{nk}}{4} \bigg (\text{Tr} (\rho^2) + \text{Tr}(\sigma^2) - 2 \text{Tr}(\rho\sigma) \bigg ) \\
&= \frac{d^{nk}}{4} \bigg (\text{Tr} (\sigma^2)  -  \text{Tr}(\rho^2) \bigg ).
\end{align}

We will now use the following fact to help bound $\text{Tr}(\rho^2)$ and $\text{Tr}(\sigma^2)$, deferring the proof to \cref{app:binbounds}.

\begin{restatable}{fact}{binbounds}\label{fact:binbounds} For all $a,b \in \mathbbm{N}$ with $b\geq 1$, we have 
    \[
   \frac{a^b}{b!}\leq  {a + b -1 \choose b}  \leq \frac{a^b }{b!} ~ e^{b^2/a}.
    \]
\end{restatable}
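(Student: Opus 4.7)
The plan is to expand the binomial coefficient as a product and bound each factor. Writing
\[
\binom{a+b-1}{b} = \frac{(a+b-1)(a+b-2)\cdots a}{b!} = \frac{1}{b!}\prod_{i=0}^{b-1}(a+i) = \frac{a^b}{b!}\prod_{i=0}^{b-1}\left(1+\frac{i}{a}\right),
\]
reduces the claim to sandwiching the product $P := \prod_{i=0}^{b-1}(1 + i/a)$ between $1$ and $e^{b^2/a}$.

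For the lower bound, each factor satisfies $1 + i/a \geq 1$, so $P \geq 1$, giving $\binom{a+b-1}{b} \geq a^b/b!$ immediately. For the upper bound, I would use the elementary inequality $1+x \leq e^x$ for all $x \in \mathbb{R}$, applied termwise:
\[
P \leq \prod_{i=0}^{b-1} e^{i/a} = \exp\!\left(\frac{1}{a}\sum_{i=0}^{b-1} i\right) = \exp\!\left(\frac{b(b-1)}{2a}\right) \leq \exp\!\left(\frac{b^2}{a}\right),
\]
where the last step just uses $b(b-1)/2 \leq b^2$ for $b \geq 1$. Multiplying through by $a^b/b!$ yields the upper bound.

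There is no real obstacle here; the statement is a standard sharpening of the trivial estimate $\binom{a+b-1}{b} \geq a^b/b!$. The only thing to double-check is that the proof is valid for the edge cases $a=0$ or $b=1$ as stated (for $b=1$ both bounds reduce to $a \leq a \cdot e^{1/a}$, which is trivial; the hypothesis $b \geq 1$ is all that is needed since for $a=0$ one has $\binom{b-1}{b} = 0$ and $a^b/b! = 0$, and the upper bound is vacuous with the convention $e^{b^2/0} = +\infty$, so in practice the fact is applied with $a \geq 1$ where everything is finite). The entire argument is two lines of inequalities, so I would present it directly without further scaffolding.
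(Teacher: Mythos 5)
Your proof is correct and follows essentially the same route as the paper: both expand $\binom{a+b-1}{b}$ as the rising product $a(a+1)\cdots(a+b-1)/b!$, get the lower bound for free, and obtain the upper bound via $1+x\le e^x$. The only (cosmetic) difference is that you apply $1+x\le e^x$ termwise to each factor $1+i/a$, yielding the slightly sharper intermediate bound $e^{b(b-1)/(2a)}$, whereas the paper first bounds every factor by $a+b-1$ and applies the inequality once; both land on $e^{b^2/a}$.
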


We can use this to bound $\text{Tr}(\rho^2)$ as follows.

\[
\text{Tr}(\rho^2) = \frac{1}{{d^n + k -1 \choose k}} \geq \frac{ k!}{e^{k^2 / d^n} d^{nk}}.
\]

To bound $\Tr(\sigma^2)$, we can again employ \cref{fact:binbounds} to obtain
\begin{align}
\frac{1}{{d^{\abs{S}} + k -1 \choose k}} \leq  \frac{k!}{d^{\abs{S}k}},
\end{align}

so that
\begin{align}
    \text{Tr}(\sigma^2) &= \text{Tr} \left ( \mathop{\mathbbm{E}}_{S, T \subseteq [n]} \left [ \frac{ \left (\Pi_{d^{|S|}}^k \otimes_S \Pi_{d^{|S^c|}}^k \right ) \left ( ~  \Pi_{d^{|T|}}^k \otimes_T \Pi_{d^{|T^c|}}^k \right )}{{d^{\abs{S}} + k -1 \choose k} {d^{\abs{S^c}} + k -1 \choose k} {d^{\abs{T}} + k -1 \choose k} {d^{\abs{T^c}} + k -1 \choose k}} \right ] \right ) \\
    &\leq \frac{(k!)^4}{d^{2nk}} ~  F(k,n, d),
\end{align}
using $\abs{S} + \abs{S^c} + \abs{T} + \abs{T^c} = 2n$ and where we define
\[
F \equiv F(k,n,d) = \text{Tr} \left ( \mathop{\mathbbm{E}}_{S, T \subseteq [n]} \left [ \left ( \Pi_{d^{|S|}}^k \otimes_S \Pi_{d^{|S^c|}}^k ~ \right ) \left (  \Pi_{d^{|T|}}^k \otimes_T \Pi_{d^{|T^c|}}^k \right ) \right ]  \right ).
\]

Hence at this stage we have
\begin{align}
D^2 &\leq  \frac{d^{nk}}{4} \bigg (\text{Tr} (\sigma^2)  -  \text{Tr}(\rho^2) \bigg ) \\ 
&\leq  \frac{d^{nk}}{4} \bigg ( F \cdot \frac{(k!)^4}{d^{2nk}} -  \frac{ k!}{e^{k^2 / d^n} d^{nk}} \bigg ) \\ 
&= \frac{k!}{4} \bigg ( F \cdot \frac{(k!)^3}{d^{nk}} -  e^{-k^2 / d^{n}}   \bigg  ). \label{eq: F expression}
\end{align}

We now seek an upper bound on $F$. Recall that
\[
\Pi^k_d = \mathop{\mathbbm{E}}_{\alpha \in \mathcal{S}_k} \left [ U_\alpha \right  ],
\]
where
\[
U_\alpha  = \sum_{\mathbf{x}\in [d]^k} \ketbra{x_{\alpha^{-1}(1)}, ..., x_{\alpha^{-1}(k)}}{x_1, ..., x_k}.
\]

We can thus write

\begin{align}
    F &= \text{Tr} \bigg ( \mathop{\mathbbm{E}}_{S, T} \bigg [ \left ( \Pi_{d^{|S|}}^k \otimes_S \Pi_{d^{|S^c|}}^k \right ) \left (  \Pi_{d^{|T|}}^k \otimes_T \Pi_{d^{|T^c|}}^k \right ) \bigg ] \bigg ) \\
    & = \mathop{\mathbbm{E}}_{\substack{S, T \subseteq [n] \\ \alpha, \beta, \gamma, \delta \in \mathcal{S}_k}}  \left [  \text{Tr} \left ( \left (  U^S_\alpha \otimes_S U^{S^c}_\beta ~ \right ) \left ( ~  U^T_\gamma \otimes_T  U^{T^c}_\delta  \right ) \right ) \right ]  \\
    & =\mathop{\mathbbm{E}}_{\substack{S, T \\ \alpha, \beta, \gamma, \delta}} \bigg [  \text{Tr} \bigg (  U^{S\cap T} _{\alpha \gamma} \otimes  U^{S\cap T^c} _{\alpha \delta} ~ \otimes ~   U^{S^c \cap T} _{\beta \gamma} ~ \otimes ~  U^{S^c \cap T^c } _{\beta \delta} \bigg ) \bigg ] \\
& =\mathop{\mathbbm{E}}_{\substack{S, T \\ \alpha, \beta, \gamma, \delta}} \bigg [  \text{Tr} \bigg (  U^{S\cap T} _{\alpha \gamma} \bigg ) ~  \text{Tr} \bigg (  U^{S\cap T^c} _{\alpha \delta} \bigg  ) ~ \text{Tr} \bigg (  U^{S^c \cap T} _{\beta \gamma} \bigg )  ~ \text{Tr} \bigg (  U^{S^c \cap T^c   } _{\beta \delta} \bigg )  \bigg ],
\end{align}
where the superscripts denote the systems that the unitaries act on.

We now use the following fact, giving proof in \cref{app:cycles}.
\begin{restatable}{fact}{cycles} \label{fact:cycles}
For some permutation $\alpha \in \mathcal{S}_k$, consider the unitary
\[
U_\alpha  = \sum_{\mathbf{x}\in [d]^k} \ketbra{x_{\alpha^{-1}(1)}, ..., x_{\alpha^{-1}(k)}}{x_1, ..., x_k}.
\]
Let $c(\alpha)$ denote the number of cycles in the permutation $\alpha$ when written in cycle decomposition. Then we have
    \[
    \text{Tr} \left ( U_\alpha \right ) = d^{c(\alpha)}.
    \]
    
\end{restatable}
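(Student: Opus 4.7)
The plan is to compute the trace directly from the computational basis representation of $U_\alpha$. Writing
\[
\text{Tr}(U_\alpha) = \sum_{\mathbf{x} \in [d]^k} \langle x_1, \ldots, x_k | U_\alpha | x_1, \ldots, x_k \rangle,
\]
and substituting the given formula for $U_\alpha$, each diagonal element reduces to $\prod_{i=1}^k \delta_{x_i,\, x_{\alpha^{-1}(i)}}$, which is $1$ when $\mathbf{x}$ satisfies $x_i = x_{\alpha^{-1}(i)}$ for every $i$ and $0$ otherwise. The trace is therefore just the number of length-$k$ strings over $[d]$ that are fixed by the coordinate-permutation action of $\alpha^{-1}$.

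Next, I would argue combinatorially that the system of equalities $x_i = x_{\alpha^{-1}(i)}$ forces $\mathbf{x}$ to be constant on each orbit of $\alpha^{-1}$ acting on $\{1,\ldots,k\}$. Since $\alpha$ and $\alpha^{-1}$ share the same cycle decomposition as an unordered partition of $\{1,\ldots,k\}$ (an inverse traverses each cycle in the opposite direction but on the same set of indices), this is equivalent to $\mathbf{x}$ being constant on each of the $c(\alpha)$ cycles of $\alpha$. Within each cycle the common value may be chosen independently from $[d]$, contributing a factor of $d$; hence the number of admissible strings is exactly $d^{c(\alpha)}$, yielding the stated identity.

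There is essentially no obstacle here: the proof is a one-line basis expansion followed by an elementary counting argument. The only minor care needed is to keep straight the distinction between $\alpha$ and $\alpha^{-1}$, but this is immaterial for the final count because inverse permutations induce the same partition of $\{1,\ldots,k\}$ into cycles. If desired, one could also package the argument more algebraically by noting $\text{Tr}(U_\alpha) = \prod_{C \in \mathcal{C}(\alpha)} \text{Tr}(U_{C})$, where $\mathcal{C}(\alpha)$ denotes the cycles and each cyclic permutation $C$ on $|C|$ systems contributes $\text{Tr}(U_C) = d$ (since only the constant strings are fixed), but this is essentially the same computation.
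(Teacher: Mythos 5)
Your proof is correct and follows essentially the same route as the paper's: both count the fixed points of the coordinate-permutation action on strings in $[d]^k$, observe that the fixed strings are exactly those constant on each cycle, and conclude that there are $d^{c(\alpha)}$ of them. Your closing remark about factoring $U_\alpha$ as a tensor product over cycles is in fact precisely how the paper packages the argument.
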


Hence we can write
\begin{align}
    F & =\mathop{\mathbbm{E}}_{\substack{S, T \\ \alpha, \beta, \gamma, \delta}} \bigg [  \text{Tr} \bigg (  U^{S\cap T} _{\alpha \gamma} \bigg ) ~  \text{Tr} \bigg (  U^{S\cap T^c} _{\alpha \delta} \bigg  ) ~ \text{Tr} \bigg (  U^{S^c \cap T} _{\beta \gamma} \bigg )  ~ \text{Tr} \bigg (  U^{S^c \cap T^c   } _{\beta \delta} \bigg )  \bigg ] \\
  &= \mathop{\mathbbm{E}}_{\substack{S, T \\ \alpha, \beta, \gamma, \delta}} \bigg [  d^{\abs{S \cap T} c(\alpha \gamma) + \abs{S \cap T^c} c(\alpha \delta) + \abs{S^c \cap T} c(\beta \gamma) + \abs{S^c \cap T^c} c(\beta \delta)} \bigg ].
\end{align}

We can simply this expression slightly and eliminate one of the sums over $\mathcal{S}_k$ as follows. First we use the substitutions relabelling $\delta' = \beta \delta$ and $\gamma' = \beta \gamma$:

\begin{align}
    F &= \mathop{\mathbbm{E}}_{\substack{S, T \\ \alpha, \beta, \gamma', \delta'}} \bigg [  d^{\abs{S \cap T} c(\alpha \beta^{-1} \gamma') + \abs{S \cap T^c} c(\alpha \beta^{-1} \delta') + \abs{S^c \cap T} c(\gamma') + \abs{S^c \cap T^c} c(\delta')} \bigg ] .
\end{align}
Next we can define $\alpha' = \alpha \beta^{-1}\delta'$, followed by $\delta '' = \delta^{-1}$ to get
\begin{align}
    F &= \mathop{\mathbbm{E}}_{\substack{S, T \\ \alpha', \beta, \gamma', \delta'}} \bigg [  d^{\abs{S \cap T} c(\alpha \delta'^{-1} \gamma') + \abs{S \cap T^c} c(\alpha') + \abs{S^c \cap T} c( \gamma') + \abs{S^c \cap T^c} c( \delta')} \bigg ] \\
    &= \mathop{\mathbbm{E}}_{\substack{S, T \\ \alpha', \gamma', \delta''}} \bigg [  d^{\abs{S \cap T} c(\alpha \delta'' \gamma') + \abs{S \cap T^c} c(\alpha') + \abs{S^c \cap T} c( \gamma') + \abs{S^c \cap T^c} c( \delta''^{-1})} \bigg ].
\end{align}

Finally, we can use the fact that the cycle number is preserved under inverses, i.e. $c(\delta^{-1}) = c(\delta)$, and perform a global relabelling to arrive at
\begin{align}
    F &= \mathop{\mathbbm{E}}_{\substack{S, T \\ \alpha, \gamma, \delta}} \bigg [  d^{\abs{S \cap T} c(\alpha \delta \gamma) + \abs{S \cap T^c} c(\alpha) + \abs{S^c \cap T} c( \gamma) + \abs{S^c \cap T^c} c( \delta)} \bigg ].
\end{align}

We now separate out the case where $\alpha, \delta, \gamma$ are all the identity permutation $e \in \mathcal{S}_k$, so here for the cycle number is $c(\alpha) =  c(\delta) =  c(\gamma) = k$.

\begin{align}
     F &= \mathop{\mathbbm{E}}_{\substack{S, T \\ \alpha, \gamma, \delta}} \bigg [  d^{\abs{S \cap T} c(\alpha \delta \gamma) + \abs{S \cap T^c} c(\alpha) + \abs{S^c \cap T} c( \gamma) + \abs{S^c \cap T^c} c( \delta)} \bigg ] \\
    &=  \tfrac{1}{(k!)^3}\mathop{\mathbbm{E}}_{\substack{S, T}} \bigg [  d^{k (\abs{S \cap T} + \abs{S \cap T^c} + \abs{S^c \cap T} + \abs{S^c \cap T^c} )} \bigg ] \\ &\quad +  \tfrac{(k!)^3 -1 }{(k!)^3} \mathop{\mathbbm{E}}_{\substack{S, T  \\ \alpha, \delta, \gamma \in \mathcal{S}_k \\ (\alpha, \delta, \gamma) \neq (e,e,e) }} \bigg [  d^{\abs{S \cap T} c(\alpha \delta \gamma ) + \abs{S \cap T^c} c(\alpha) + \abs{S^c \cap T} c(\gamma ) + \abs{S^c \cap T^c} c(\delta)} \bigg ] \\
    &\leq  \frac{d^{nk}}{(k!)^3}  + \mathop{\mathbbm{E}}_{\substack{S, T  \\ \alpha, \delta, \gamma \in \mathcal{S}_k \\ (\alpha, \delta, \gamma) \neq (e,e,e) }} \bigg [  d^{\abs{S \cap T} c(\alpha \delta \gamma ) + \abs{S \cap T^c} c(\alpha) + \abs{S^c \cap T} c(\gamma ) + \abs{S^c \cap T^c} c(\delta)} \bigg ] .
    \end{align}

We now seek an upper bound on the second expression. If $\alpha$, $\delta$, $\gamma$ are all not the identity, then at least two of $c(\alpha \delta \gamma )$, $c(\alpha)$, $c(\gamma )$,  $c(\delta)$, must be at most $k-1$ (it is clearly not possible for only one of them to be $k-1$, and the others $k$). We now find the maximum amongst these ${4 \choose 2}=6$ possibilities.

One can verify that the symmetries of $S \leftrightarrow S^c$ and $T \leftrightarrow T^c$ in the expectation value mean we can consider without loss of generality the following two cases
\begin{align}
    \mathop{\mathbbm{E}}_{S,T} \bigg [  d^{\abs{S \cap T}k + \abs{S \cap T^c}k + \abs{S^c \cap T} (k-1) + \abs{S^c \cap T^c} (k-1)} \bigg ] &= d^{n(k-1)}\mathop{\mathbbm{E}}_{S,T} \bigg [ d^{\abs{S\cap T} + \abs{S\cap T^c}} \bigg ] \\
        \mathop{\mathbbm{E}}_{S,T} \bigg [  d^{\abs{S \cap T} k + \abs{S \cap T^c}(k-1) + \abs{S^c \cap T} (k-1) + \abs{S^c \cap T^c} k} \bigg ] &= d^{n(k-1)}\mathop{\mathbbm{E}}_{S,T} \bigg [ d^{\abs{S\cap T} + \abs{S^c\cap T^c}} \bigg ].\label{eq:second_ST}
\end{align}
The first expression can be evaluated as follows

\begin{align}
     \mathop{\mathbbm{E}}_{\substack{S, T }} \bigg [  d^{ \abs{S \cap T}  + \abs{S \cap T^c}} \bigg ] &=   \mathop{\mathbbm{E}}_{\substack{S}} \bigg [  d^{ \abs{S}}\bigg ]  \\
     &=  \frac{1}{2^n} \sum_{s=0}^n {n \choose s} d^{s} \\
     &= \left (\frac{1+d}{2} \right )^n.
\end{align}

For the second expression, in \cref{eq:second_ST}, we can rewrite the expectation value over $S$ and $T$ by introducing random variables $X_i$, for $i \in [n]$, that equal $1$ if $i \in S$ and $0$ otherwise (each occurring with probability $\frac{1}{2})$. Let $Y_i$ be similarly defined with respect to $T$. We can then write

\begin{align}
     \mathop{\mathbbm{E}}_{S,T} \bigg [ d^{\abs{S\cap T} + \abs{S^c\cap T^c}} \bigg ] &=  \mathop{\mathbbm{E}}_{ \substack{X_1, \dots, X_n \\ Y_1, \dots, Y_n}}  \bigg [ d^{\sum_i X_i Y_i + (1-X_i)(1-Y_i)} \bigg ] \\
     &=  \prod_{i=1}^n \mathop{\mathbbm{E}}_{ X_i, Y_i}  \bigg [ d^{X_i Y_i + (1-X_i)(1-Y_i)} \bigg ] \\
     &= \bigg ( \frac{1+d}{2} \bigg )^n,
\end{align}
where in the second line we used the independence of the random variables $X_i$ and $Y_i$.

Having considered all possibilities, we can conclude that the following inequality holds
\[
\mathop{\mathbbm{E}}_{\substack{S, T  \\ \alpha, \delta, \gamma \in \mathcal{S}_k \\ (\alpha, \delta, \gamma) \neq (e,e,e) }} \bigg [  d^{\abs{S \cap T} c(\alpha \delta \gamma ) + \abs{S \cap T^c} c(\alpha) + \abs{S^c \cap T} c(\gamma ) + \abs{S^c \cap T^c} c(\delta)} \bigg ]  \leq d^{(k-1)n} \bigg ( \frac{1+d}{2} \bigg )^n.
\]

This places the following bound on $F$
\begin{align}
    F &\leq  \frac{d^{nk}}{(k!)^3}  + d^{(k-1)n} \left (\frac{1+d}{2} \right )^n \\
    &= \frac{d^{nk}}{(k!)^3} + d^{nk} \left (\frac{1+d}{2d} \right )^n .
\end{align}

Bringing this all together and plugging in our bound on $F$ into \cref{eq: F expression}, we have
\begin{align}
D^2 &\leq \frac{k!}{4} \bigg ( F \cdot \frac{(k!)^3}{d^{nk}} -  e^{-k^2 / d^{n}}   \bigg  ) \\
& \leq  \frac{k!}{4} \bigg (1 + (k!)^3\left (\frac{1+d}{2d} \right )^n -  e^{-k^2 / d^{n}}   \bigg  ),
\end{align}

as claimed.

\end{proof}

\section{Concluding Remarks}

We have demonstrated that testing bipartite productness requires at least $\Omega(n / \log n)$ copies, which matches the upper bound of $O(n)$ from \cite{harrow2017sequential} up to a logarithmic factor. As testing multipartite productness only requires $O(1/\epsilon^2)$ copies \cite{harrow2013testing, montanaro2013survey}, our result is the first to show that testing bipartite productness is strictly harder. As discussed in the introduction, this also implies that computing the generalised geometric measure of entanglement for multipartite states requires $\Omega(n / \log n)$ copies in general. Another implication is that if one wishes to test the property of some graph state corresponding to a non-connected graph using less than $\Omega(n / \log n)$ copies, one would need a different approach to that of simply testing for bipartite productness.

In would be interesting to see if our bound could be further tightened to $\Omega(n)$ to match the known upper bound more closely, although we believe alternative proof techniques would be needed. One could also study the dependence on $\epsilon$ in more depth -- in our techniques this dependence appears via \cref{lem:schmidtprob} and \cref{fact:schmidt} in combination, however in the limit of large $n$ the relevant term in \cref{lem:existenceimplies} tends to zero for all $\epsilon$. 

Another compelling avenue would be to examine the complementary property of being genuine multipartite entangled, which to the best of our knowledge has not yet been studied. Clearly one cannot directly test for this in the property testing framework, as any bipartite product state can be arbitrarily close to a GME state in trace distance, by slight perturbations of the state. However, one could consider the property of being maximally multipartite entangled according to some measure, such as the generalised geometric measure discussed in this work.

\subsection*{Acknowledgements}

BDMJ is grateful for support from UK EPSRC (EP/SO23607/1). AM acknowledges funding from the European Research Council (ERC) under the European Union’s Horizon 2020 research and innovation programme (grant agreement No. 817581).

\addcontentsline{toc}{section}{References}
\bibliographystyle{alphaurl}
\bibliography{references}

\appendix

\addtocontents{toc}{\protect\setcounter{tocdepth}{1}}

\section{Additional Proofs}
\subsection{Proof of \cref{fact:schmidt}} \label{app:schmidt}

\schmidt*

\begin{proof}~
\begin{enumerate}[(i)]
    \item Write $\ket{\psi} = \sum_i \gamma_i \ket{u_i}\ket{v_i}$ in Schmidt decomposition, with $\ket{u_i}$ and $\ket{v_i}$ respectively orthonormal sets and $\gamma_i$ non-negative and non-increasing with $i$. Denote $\gamma_{\max} \equiv \gamma_0$ as the maximum Schmidt coefficient. Clearly taking $\ket{\alpha} = \ket{u_0}$ and $\ket{\beta} = \ket{v_0}$ shows that
\[
    \gamma_{\max} \leq {\max}_{\alpha, \beta} \abs{\bra{\psi} ~ \ket{\alpha} \ket{\beta}}.
\]
We also have that

\begin{align}
    {\max}_{\alpha, \beta} \abs{\bra{\psi} ~ \ket{\alpha} \ket{\beta}}  &\leq \sum \gamma_i \abs{\braket{\alpha}{u_i} \braket{\beta}{v_i}} \\
    &\leq \gamma_{\max}  \sum \abs{\braket{\alpha}{u_i} \braket{\beta}{v_i}} \\
    &\leq \gamma_{\max}  \sqrt{\sum_i \abs{\braket{\alpha}{u_i}}^2 \sum_j \abs{\braket{\beta}{v_j}}^2} \\
    &\leq \gamma_{\max},
\end{align}
where we used the Triangle and Cauchy-Schwarz inequalities.

\item 
Recall the well known relation between trace distance and fidelity for pure states
\[
    \tfrac{1}{2}\norm{\ketbra{\psi} - \ketbra{\phi}}_1 =  \sqrt{1 - \abs{\braket{\psi}{\phi}}^2}.
\]

    Now let $\ket{\phi} = \ket{\alpha}\ket{\beta}$ be a BP state (written across some bipartition). Then for $\gamma$ the maximum Schmidt coefficient of $\ket{\psi}$ over all bipartitions, we have
    \[
    \tfrac{1}{2}\norm{\ketbra{\psi} - \ketbra{\phi}}_1 =  \sqrt{1 - \abs{\bra{\psi} ~ \ket{\alpha}\ket{\beta}}^2} \geq \sqrt{1 - \gamma^2},
    \]
    where the last inequality follows from part $(i)$.
    \end{enumerate}

\end{proof}

\subsection{Proof of \cref{fact:haar}} \label{app:haar} 

\haarcloseness*

\begin{proof}

We can write 

\[
\rho' = \frac{1}{1-p} \int d \phi ~ \ketbra{\phi}^{\otimes k} \mathbf{1}_S (\phi),
\]

where $\mathbf{1}_S (\phi) = 1$ if $\ket{\phi} \in S$ and $0$ otherwise (the indicator function), and $p$ is a normalisation factor enforcing $\text{Tr}(\rho')=1$. Then we have 

\begin{align}
    \norm{\rho - \rho'}_1 &= \norm{\int d\psi ~ \ketbra{\psi}^{\otimes k}  \bigg (\id  - \frac{1}{1-p} \mathbf{1}_S \bigg )}_1 \\
    &\leq \int d\psi ~ \norm{\ketbra{\psi}^{\otimes k}}_1  \abs{1  - \frac{1}{1-p} \mathbf{1}_S } \\
    &= \int d\psi ~ \abs{1  - \frac{1}{1-p} \mathbf{1}_S } \\
    &= \int_S d\psi ~ \left ( \frac{1}{1-p} - 1  \right ) + \int_{S^c} d\psi \\
    &= (1-p)  \left ( \frac{1}{1-p} - 1  \right ) + p \\
    &= 2 p.
\end{align}
So $D(\rho, \rho') \equiv \frac{1}{2} \norm{\rho - \rho'}_1 = p$, as required.

\end{proof}

\subsection{Proof of \cref{fact:binbounds}} \label{app:binbounds}

\binbounds*

\begin{proof}
    Firstly, we have
    \begin{align}
{a + b -1 \choose b} &= \frac{(a + b - 1)!}{b! (a -1)!} \\
&= \frac{(a + b - 1) ~ \dots ~ (a)}{b!} \\
& \geq  \frac{a^b}{b!}.
\end{align}
For the upper bound, observe that

\begin{align}
{a + b -1 \choose b} &= \frac{(a + b - 1)!}{b! (a -1)!} \\
&= \frac{(a + b - 1) ~ \dots ~ (a)}{b!} \\
&\leq \frac{(a + b - 1)^b}{b!} \\
&\leq \frac{a^b}{b!} (1   + \frac{b}{a})^b \\
&\leq \frac{a^b }{b!} e^{b^2 / a},
\end{align}
where in the last line we used that $1+x \leq e^x$ for all real $x$.
\end{proof}

\subsection{Proof of \cref{fact:cycles}} \label{app:cycles}

\cycles*

\begin{proof}

    The trace of a permutation matrix is the number of fixed points. First consider if there is only one cycle. Then there are $d$ fixed points, occurring exactly when $x_1 = \dots = x_k$. Now suppose $\alpha$ has $c(\alpha)=m$ cycles, and write it in cycle decomposition as
    \[
    \alpha = \alpha_1 \dots \alpha_m.
    \]
    As these cycles act on independent copies of the system, we can write
    \[
    U_\alpha = U_{\alpha_1} \otimes \dots \otimes U_{\alpha_m},
    \]
    from which it follows that
    \begin{align}
        \Tr(U_\alpha) &= \text{Tr}(U_{\alpha_1}) \dots \text{Tr}(U_{\alpha_m}) \\
        &= d \times \dots \times d \\
        &= d^m.
    \end{align}

\end{proof}

\addtocontents{toc}{\protect\setcounter{tocdepth}{2}}

\end{document}